  \def\moverlay{\mathpalette\mov@rlay}
  \def\mov@rlay#1#2{\leavevmode\vtop{%
     \baselineskip\z@skip \lineskiplimit-\maxdimen
     \ialign{\hfil$#1##$\hfil\cr#2\crcr}}}
\numberwithin{equation}{section}
\newcommand{\Reals}{{\mathbb R}}         
\newcommand{\half}{\frac{1}{2}}         
\newcommand{\HH}{\mathcal H}
\newcommand{\QQ}{\mathcal Q}
\newcommand{\scri}{\mathcal{I}}
\theoremstyle{plain}
\newtheorem{thm}{Theorem}
\newtheorem{cor}[thm]{Corollary}
\newtheorem{lemma}[thm]{Lemma}
\newtheorem{definition}[thm]{Definition}
\newtheorem{remark}[thm]{Remark}
\newtheorem{case}{Case}
\newcounter{mnotecount}[section]
\newcounter{mymnotecount}[section]
\title{Characterization of Null Geodesics on Kerr Spacetimes}
\author[C.~F. Paganini, B. Ruba and M.~A. Oancea]{ Claudio F. Paganini$^{\dagger,*}$,  Blazej Ruba$^\ddagger$ and Marius A. Oancea$^\dagger$} 
\email{claudio.paganini@aei.mpg.de}
\email{blazej.ruba@doctoral.uj.edu.pl} 
\email{marius.oancea@aei.mpg.de}
\address{$^\dagger$Max Planck Institute for Gravitational Physics (Albert Einstein Institute), Am~M\"uhlenberg 1, D-14476 Potsdam, Germany \\
 $^*$Fakult\"at f\"ur Mathematik, Universit\"at Regensburg, D-93040 Regensburg, Germany\\
 $^\ddagger$Institute of Theoretical Physics, Jagiellonian University, Krak\'{o}w, Poland}
\begin{document}

\date{\today \ {\em File:\jobname{.tex}}}

\begin{abstract}
We consider null geodesics in the exterior region of a subextremal Kerr spacetime. We show that most well-known fundamental properties of null geodesics can be represented on one plot. In particular, one can see immediately that the ergoregion and trapping are separated in phase space.
Furthermore, we show that from the point of view of any timelike observer outside of the black hole, trapping can be understood as a smooth set of spacelike directions on the celestial sphere of the observer. Finally, we discuss some applications of these insights.
\end{abstract}

\maketitle

\tableofcontents

\section{Introduction} 

In recent years there has been a lot of progress in the Kerr uniqueness and the Kerr stability problem. Having a thorough understanding of geodesic motion and, in particular, the behavior of null geodesics in Kerr spacetimes is helpful to understand many of the harder problems related to these spacetimes. In this paper, we study the properties of null geodesics in the exterior region of subextremal Kerr spacetimes (i.e.\ with rotation parameter $a \in [0,M)$). The geodesic structure of Kerr spacetimes has been subject of a lot of research. Here, our aim is to give a unified and accessible presentation of the most important properties of geodesics in the Kerr spacetime, with regard to the open problems mentioned above. A Mathematica notebook has been developed for these lecture notes, with intention to help the reader to gain intuition on the influence of various parameters on the geodesic motion, despite the complexity of the underlying equations. It can be downloaded under the permanent link \cite{blazej_marius}. In Section \ref{sec:app} we explain where these plots give useful insights.

An extensive discussion of geodesics in Kerr spacetimes and many further references can be found for example in \cite[p.318]{MR1647491} and \cite{oneill_geometry_2014}. See \cite{teo_spherical_2003} for a nice treatment of the trapped set in Kerr spacetimes, including many explicit plots of trapped null geodesics at different radii. Here, we focus more on global properties of the null geodesics and less on the details of motion. Analysis of the turning points of a dynamical system is a powerful tool to extract information about its global behaviour. For example, in any $1+1$ dimensional system, stable bounded orbits only exist if there exist two distinct turning points in the spatial direction, between which the system can oscillate. For geodesics in Kerr spacetimes, this has been studied in detail by Wilkins \cite{wilkins_bound_1972}. The techniques used here are very close to that paper. Many of the results discussed in these lecture notes are in fact well known. However, our focus is on giving an accessible presentation. We show how all of these well-known properties can be read of from one simple plot. This makes it easier to understand the general behaviour of null geodesics in Kerr spacetimes. We show that various plots provided in the notebook cover the whole space of relevant parameters. A different representation of the forbidden regions in phase space, where (null) geodesics can't exist, can be found in \cite[p.214]{oneill_geometry_2014} and also in \cite{slezakova_geodesic_2006}. The presentation chosen here is adapted to help understand the phase space decomposition used in the proof for the decay of the scalar wave in subextremal Kerr  \cite{2014arXiv1402.7034D}.

In section \ref{sec:sphere}, we introduce the notion of black hole shadows. Null geodesics are often interpreted as light rays of infinite frequency, since they can be obtained by considering the geometrical optics approximation for various field equations, such as the scalar wave equation, Maxwell's equations or linearized gravity \cite{MTW,gravitational_lenses_book} (in some cases, the geometrical optics approximation can be extended to include finite-frequency corrections to null geodesics, see \cite{oancea2019overview} for an overview and \cite{GSHE2020} for recent developments). The shadow of a black hole is thus defined as the innermost trajectory on which light from a background source passing by a black hole can reach the observer.  The first discussion of black hole shadows in Schwarzschild spacetimes can be found in \cite{synge_escape_1966}. For extremal Kerr at infinity it was calculated later in \cite{bardeen_black_1973}. Analyzing the shadows of black holes is of direct physical interest, as the first-ever direct image of a black hole by the Event Horizon Telescope collaboration \cite{collaboration2019first} demonstrated. The opportunity to compare predictions directly with observations has led to a number of advances in the theoretical treatment of black hole shadows in recent years \cite{cunha_shadows_2016,grenzebach_aberrational_2015,grenzebach_photon_2014,grenzebach_photon_2015,hioki_measurement_2009,li_measuring_2014,paganini2018smoothness, mars2017fingerprints,abdujabbarov2015coordinate}, see \cite{cunha2018shadows} for a brief review. 

In particular, we discuss the results of our recent paper \cite{paganini2018smoothness}. We show that for any subextremal Kerr spacetime, including Schwarzschild, the past and the future trapped sets at any point are topologically a circle on the celestial sphere of any observer in the exterior region. This is an important step towards the proof in \cite{mars2017fingerprints}, showing that the radial degeneracy for the shape of the shadow, as it exists in Schwarzschild and for observers on the symmetry axis of Kerr spacetimes, is broken away from the symmetry axis. For future observations, the result in \cite{mars2017fingerprints} implies that, in principle, one  could extract all of the black hole's parameters from a precise measurement of its shadow. However, in these lecture notes we present numerical evidence suggesting that this will require precision which is likely  to remain out of reach for decades to come. 

The significance of Theorem \ref{thm:1} is due to the fact that it describes a property of trapping which does not change when going from Schwarzschild to Kerr. In fact, in a recent paper \cite{cederbaum2019geometry} a much stronger result has been proven, namely that the topology of the area of trapping in phase space remains unchanged when going from Schwarzschild to Kerr. The topology of the area of trapping for some class of higher dimensional black holes has also been analysed in \cite{bugden2019trapped}. Finally, we discuss the results in \cite[p.72]{paganini2018role} that demonstrate that the structures introduced  in the discussion of black hole shadows can be used to prove  that trapping has to exist for general black hole spacetimes.

\subsection*{Overview of this lecture note}

In section \ref{sec:kerr}, we collect some background on the Kerr spacetime. We discuss its symmetries and the  associated  conserved quantities for null geodesic  in section \ref{sec:symandcons}. 
In section \ref{sec:geodeq}, we discuss the geodesic equation in its separated form, focusing on the radial equation and the $\theta$ equation. For the radial equation, we show that many properties of its solutions can be understood from one plot. In section \ref{sec:specialgeod}, we use the analysis of section \ref{sec:geodeq} to discuss the properties of a number of special solutions. In particular, we discuss the radially ingoing and outgoing null geodesics (i.e. the principal null directions) and the trapped set, which are of relevance for the black hole stability problem. Next, we discuss the $\mathbf{T}$-orthogonal null geodesics, which are relevant for the black hole uniqueness problem. In section \ref{sec:sphere}, we prove a theorem concerning the topological structure of the past and future trapped sets. In the same section, we present numerical calculations demonstrating the magnitude of the breaking of the radial degeneracy of the black hole shadows in Kerr spacetimes. In section \ref{sec:cstool}, we discuss results regarding the existence of trapping in general black hole spacetimes. Finally, in section \ref{sec:app}, we discuss how the plots developed for this paper can be used. 

\section{The Kerr Spacetime} \label{sec:kerr} 
The Kerr family of spacetimes describes axially symmetric, stationary and asymptotically flat black hole solutions to the vacuum Einstein field equations.  We use Boyer-Lindquist (BL) coordinates ($t, r,\phi,\theta$), which have the property that the metric components are independent of $\phi$ and $t$. The metric has the form:
\begin{equation}
\label{eq:metric}
g = - \left( 1-\frac{2Mr}{\Sigma} \right) \mathrm{d}t^2 - \frac{2Mar \sin ^2 \theta}{\Sigma} 2 \mathrm{d}t \mathrm{d} \phi + \frac{\Sigma}{\Delta} \mathrm{d} r ^2 + \Sigma \mathrm{d} \theta ^2 + \frac{\sin ^2 \theta}{\Sigma} \mathcal{A} \mathrm{d} \phi ^2,
\end{equation}
where
\begin{equation}
\Sigma = r^2 + a^2 \cos ^2 \theta,
\end{equation}
\begin{equation}
\Delta(r) = r^2 - 2Mr + a^2 = (r-r_+)(r-r_-),
\end{equation}
\begin{equation}
\mathcal A = (r^2+a^2)^2 - a^2 \Delta \sin ^2 \theta.
\end{equation}
The zeros of $\Delta(r)$ are given by: 
\begin{equation}
r_{\pm}=M \pm \sqrt{M^2-a^2} ,
\end{equation}
and correspond to the location of the event horizon at $r=r_+$ and of the Cauchy horizon at $r=r_-$. In the present work we are only interested in the exterior region, hence $r\in(r_+,\infty)$.
 For our considerations it is useful to introduce an orthonormal tetrad. A convenient choice is:
\begin{subequations}\label{eq:tetrad}
\begin{align}
e_0 &= \frac{1}{\sqrt {\Sigma \Delta}} \left( (r^2+a^2)\partial _t + a \partial _ {\phi} \right), \\
e_1&=\sqrt{\frac{\Delta}{\Sigma}} \partial_r, \\
e_2&=\frac{1}{\sqrt{\Sigma}} \partial_{\theta}, \\
e_3&=\frac{1}{\sqrt{\Sigma} \sin \theta} \left( \partial_{\phi} +a \sin ^2 {\theta} \partial_t \right).
\end{align}
\end{subequations}
This frame is a natural choice as the principal null directions can be written in the simple form $e_0 \pm e_1$. These generate congruences of radially outgoing and ingoing null geodesics. We will come back to this fact in section \ref{sec:radialgeod}. In further considerations we will use $e_0$ as the local time direction.
Furthermore, we define the local rotation frequency of the black hole as
\begin{equation}\label{eq:locrotfreq}
\omega (r) = \frac{a}{r^2 +a^2},
\end{equation}
whose limit for $r \searrow r_+$ is called the rotation frequency of the horizon:
\begin{equation}\label{eq:horrotfreq}
\omega _H = \omega (r_+).
\end{equation}
The nameing for $\omega (r)$ is motivated by noting that a particle at rest in the local inertial frame given by the tetrad (\ref{eq:tetrad}) will move in the $\phi$ direction in Boyer-Lindquist coordinates with $\frac{d\phi}{dt}=\omega(r)$ with respect to an observer at rest in  this  frame at infinity.
\subsection{Symmetries and Constants of Motion}\label{sec:symandcons}
The independence of the metric components of the coordinates $t$ and $\phi$ is a manifestation of the presence of two Killing vector fields $(\partial_t)^\nu$, $(\partial_\phi)^\nu$, which both satisfy the Killing equation: 
\begin{equation}\label{eq:killingvf}
\nabla_{(\mu} K_{\nu)}=0
\end{equation} and  commutewith each other. Furthermore, the Kerr spacetime features an irreducible Killing tensor $\sigma_{\mu \nu}$, cf. \cite[p.320]{MR1647491}: a symmetric two-tensor  satisfying the Killing tensor equation:
\begin{equation}
\label{eq:killingt}
\nabla_{(\alpha}\sigma_{\beta \gamma)}=0
\end{equation}
and which can not be written as a linear combination of tensor products of Killing vectors. 
In terms of the tetrad \eqref{eq:tetrad}, the Killing tensor can be written as:
\begin{equation}
\sigma_{\mu \nu}= -a^2 \cos ^2 \theta g_{\mu \nu} + \Sigma \left( (e_2)_{\mu}  (e_2)_{\nu} + (e_3)_{\mu} (e_3)_{\nu} \right).
\label{eq:Killing}
\end{equation}
In the limiting case $a = 0$  it becomes reducible and takes the form: 
\begin{equation}\label{eq:Killing_a0}
\sigma^{(a=0)} = r^4 (\mathrm{d} \theta ^2 + \sin ^2 \mathrm{d} \phi ^2).
\end{equation}
This is the tensor field obtained by taking the sum of the second tensor  product  of the three generators of spherical symmetry. From the geodesic equation:
\begin{equation}\label{eq:geodesic}
\dot \gamma^\mu \nabla_\mu \dot \gamma^\nu =0
\end{equation}
and the Killing equations it follows that a contraction of the tangent vector $\dot \gamma^\mu$ of an affinely parametrized geodesic $\gamma$ with a Killing tensor gives a constant of motion. In Kerr spacetimes those are the mass\footnote{The metric satisfies the Killing tensor equation \eqref{eq:killingt} trivially and therefore it gives us a~conserved quantity as well.}, the energy, the angular momentum with respect to the rotation axis of the black hole and Carter's constant \cite{carter_global_1968}:
\begin{subequations}\label{eq:com}
\begin{align}
-m^2&=g_{\mu\nu}\dot \gamma^\mu\dot \gamma^\nu , \\
E&= - (\partial_t)^\nu\dot \gamma_\nu, \\
L_z&= (\partial_\phi)^\nu \dot \gamma_\nu, \\
K &= \sigma _ {\mu \nu} \dot \gamma ^ {\mu} \dot \gamma ^ {\nu}.
\end{align}
\end{subequations}
It follows from equation (\ref{eq:Killing_a0}) that in a Schwarzschild spacetime $K$ is the square of the total angular momentum of the particle. Carter's constant is non-negative for all time like or null geodesics, which can be seen immediately from equation \eqref{eq:Killing} and the fact that $g_{\mu\nu}\dot \gamma^\mu\dot \gamma^\nu\leq0$ for any future directed causal geodesic. In the case of $a\neq0$ it is even strictly positive for any time like geodesic. It turns out that some combinations of these conserved quantities are more convenient to work with, so we give them their own names: 
\begin{equation}\label{eq:qdef}
Q=K- (aE-L_z)^2,
\end{equation}
\begin{equation}\label{eq:ldef}
L^2 = L_z ^2 + Q.
\end{equation}
One can think of $L^2$ as the total angular momentum of the particle. One can then think of $Q$ as the component of the angular momentum in direction perpendicular to the rotation axis of the black hole.\footnote{The three quantities $K$, $Q$ and $L^2$ are often labeled differently by different authors.} It is important though that these interpretations should not be taken to strictly, because geodesics in Kerr spacetimes do not feature a conserved total angular momentum vector. 
\begin{remark} In contrast to $K$, $Q$ is not positive. However, the equation of motion that we will later define in \eqref{eq:theta}   implies that  for a geodesic to exist at a point  the following equality must be obeyed: 
\begin{equation}
Q \geq -a^2 E^2 \cos ^2 \theta.
\end{equation} 
This follows from the requirement that $\Theta \geq 0$ in \eqref{eq:theta}.

\end{remark} In the case of null-geodesics we can rescale the tangent vector without changing the properties of the geodesic. We will use this to reduce the number of parameters. We define the conserved quotients to be: 
\begin{equation}\label{eq:consquo1}
\mathcal {Q} = \frac{Q}{L_z^2}, \qquad \mathcal{E} = \frac{E}{L_z}.
\end{equation}
An alternative set of conserved quotients is given by: 
\begin{equation}\label{eq:consquo2}
\mathcal{K} = \frac{K}{E^2}, \qquad \mathcal{L} = \frac{L_z}{E}.
\end{equation}
These are more commonly used in the literature as they are more suited for certain calculations. However for the present work we will mostly use the first set. 

\subsubsection*{Properties of the Killing Fields and Tensor}
The vector field $(\partial_\phi)^\nu$ is spacelike for all $r>0$. The vector field $(\partial_t)^\nu$ is timelike  in the asymptotic region, for $r$ sufficiently large. It becomes spacelike in the interior of the \textit{ergoregion}, which is defined by the inequality $g( \partial _t , \partial _t ) \geq 0$, or in terms of BL-coordinates by $-\Delta+a^2 \sin ^2 \theta\leq 0$. The case of equality determines the boundary of the ergoregion, often referred to as the ergosphere,  on which $\partial_t$ is a null vector. Solving for the case of equality we get the radius of the ergosphere to be:
\begin{equation}\label{eq:ergopshere}
r_{ergo}(\theta)= M + \sqrt{M^2-a^2 \cos^2(\theta)}.
\end{equation}
At the equator the ergosphere lies at $r=2M$ while it corresponds to the horizon $r=r_+$ on the rotation axis.\\ 
As mentioned above the two Killing vector fields generate one-parameter groups of isometries. It is natural to ask if the Killing tensor present in the Kerr spacetimes can also be related to some sort of symmetry. This question can be answered using Hamiltonian formalism. 
For a Hamiltonian flow parametrized by $\lambda$ with Hamiltonian $H$ the derivative of any function $f(x,p)$ is given by the Poisson bracket:
\begin{equation}
\frac{\mathrm{d}f}{\mathrm{d}\lambda} = \{ H ,f \} \equiv \frac{\partial H}{\partial p_{\mu}}\frac{\partial f}{\partial x^{\mu}} - \frac{\partial H}{\partial x^{\mu}}\frac{\partial f}{\partial p_{\mu}}.
\end{equation}
Each smooth function on phase space can be taken as a Hamiltonian and therefore gives rise to a local flow. Geodesic motion is generated by the function $-m^2$. $E$ and $L_z$ generate translations in $t$ and $\phi$. In the case $a=0$ the function $K$ generates rotations in the plane orthogonal to the particle's total angular momentum vector with angular velocity equal to the angular momentum of the particle. For rotating black holes it involves a change of all spatial coordinates, but it leaves quantities conserved under geodesic flow invariant. This flow explicitly depends on fiber coordinates $p_{\mu}$ and can not be projected to a symmetry of the spacetime manifold itself. 
 
\section{Geodesic Equations}\label{sec:geodeq}
We now focus our attention on null geodesics. The constants of motion introduced in \eqref{eq:com} can be used to decouple the geodesic equation to a set of four first order ODEs, cf. \cite[p. 242 ]{MR1647491}:
\begin{subequations}\label{eq:eom}
\begin{align}
\Sigma \Delta \dot t &= \mathcal A E - 2Mar L_z,\\
\Sigma \Delta \dot \phi &= 2Mar E + (\Sigma -2Mr) \frac{L_z}{\sin ^2 \theta},\\
\Sigma ^2 \dot r ^2  &= R (r,E,L_z,K)= ((r^2+a^2)E-aL_z)^2-\Delta K,
\label{eq:radial}\\
\Sigma ^2 \dot \theta ^2 &= \Theta (\theta, E, L_z, Q) = Q - \left( \frac{L_z^2}{\sin ^2 \theta} - E^2 a ^2 \right) \cos ^2 \theta,
\label{eq:theta}
\end{align}
\end{subequations}
where the dot denotes differentiation with respect to the affine parameter $\lambda$. \footnote{The radial and the angular equation can be entirely decoupled by introducing a new non-affine parameter $\kappa$ for the geodesics. It is defined by $\frac{\mathrm{d} \kappa}{\mathrm d \lambda}=\frac{1}{\Sigma}$.} For $L_z \neq 0$ the four equations are homogeneous in $L_z$, when written in terms of the conserved quotients. For the radial and the angular equations we have: 
\begin{align}
R(r,E,L_z,Q)&=L_z ^2 R(r,\mathcal E,1,\mathcal Q),\\
\Theta (\theta, E, L_z, Q)& =L_z ^2 \Theta(r,\mathcal E,1,\mathcal Q).
\end{align} 
\begin{remark}
To avoid introducing new functions whenever we change between different sets of conserved quantities we use $R(r,E,L_z,Q)=R(r,E,L_z,K(Q,L_z,E))$.
\end{remark}From the homogeneity of the equations of motion \eqref{eq:eom} we get that the only conserved quantities which affect the dynamics are conserved quotients like $\mathcal E$, $\mathcal Q$ or $\frac{Q}{E^2}$ in the case of $L_z=0$. This is due to the fact that an affine reparametrization $\lambda \mapsto \alpha \lambda$, $\dot \gamma \mapsto \alpha ^{-1} \dot \gamma$ changes the values of $E$, $L_z$ and $Q$ while leaving the trajectories and the aforementioned quotients unchanged. The case $L_z=0$ can be seen as the limit of $\mathcal Q$ and $\mathcal E$ tending to infinity. In these notes we will omit a separate discussion of this case as it is essentially equivalent to the Schwarzschild case and it is not needed for the understanding of the phase space decomposition in 
\cite{2014arXiv1402.7034D}. It is sufficient to consider future directed geodesics as the past directed case follows from the symmetry of the metric when replacing ($t,\phi$) with ($-t,-\phi$).\\
In the Schwarzschild case, a causal geodesic in the external region is future directed if and only if $\dot t > 0$. For Kerr, the suitable condition is $g(\dot \gamma, e_0)\leq 0$. This can be rewritten as
\begin{equation}
\label{eq:future_directed}
E\geq \omega (r) L_z.
\end{equation}
In terms of the conserved quotients this condition takes the form:
\begin{equation}
\mathrm{sgn} ( L_z ) = 
\begin{cases} 
+1 & \mbox{if } \mathcal E > \omega(r) \\
-1 & \mbox{if } \mathcal E < \omega(r) \\
\mathrm{undet.}&\mbox{if } \mathcal E = \omega(r)
\end{cases} 
\label{eq:future_directed_E}
\end{equation}
which eventually allows us to present the pseudo potential, which will be introduced in the next subsection, for the co-rotating and the counter-rotating geodesics in the same plot. 
\subsection{The Radial Equation}\label{sec:radialeq} 
In this section we characterize the radial motion by locating the turning points of a geodesic in $r$ direction. Turning points are characterized by the fact that the component of the tangent vector $\dot \gamma$ in the radial direction satisfies $\dot r =0$. From equation \eqref{eq:radial} we see immediately that the radial turning points are given by the zeros of the radial function $R$. In the following, we will investigate the existence and location of these zeros. In this section we will be working with the conserved quantity $Q$, as it is well suited to describe the phenomena we are interested in here, namely trapping and null geodesics with negative energy $E$.  
\begin{lemma}\label{lem:qneg} $R(r,E,L_z,Q)$ is strictly positive in the exterior region for $Q < 0$.\end{lemma}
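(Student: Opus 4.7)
Proof proposal. The plan is to rewrite $R$ as an explicit polynomial in $r$ whose coefficients carry definite signs once the physical constraints on $(E, L_z, Q)$ are taken into account. Expanding the definition and grouping by powers of $r$ yields
\begin{equation*}
R(r,E,L_z,Q) = E^2 r^4 + (a^2 E^2 - L^2)\, r^2 + 2 M K\, r - a^2 Q,
\end{equation*}
with $L^2 = L_z^2 + Q$ and $K = (aE - L_z)^2 + Q$. For $a > 0$ and $Q < 0$ three of the four coefficients are manifestly nonnegative in the exterior: the leading coefficient $E^2$ is nonnegative; the constant term $-a^2 Q$ is strictly positive; and the linear coefficient $2 M K$ is nonnegative because Carter's constant satisfies $K \geq 0$ for any causal geodesic, as recalled after \eqref{eq:ldef}. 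The Schwarzschild case $a = 0$ is vacuous, since then $Q = L^2 - L_z^2 \geq 0$.

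The substantive step is to control the middle coefficient $a^2 E^2 - L^2$, and this is where I would invoke the angular equation \eqref{eq:theta}: a null geodesic exists only if $\Theta(\theta) \geq 0$ is achieved somewhere along its $\theta$-range. An elementary maximization in the variable $u = \cos^2\theta$ gives
\begin{equation*}
\max_\theta \Theta(\theta) = Q + (a|E| - |L_z|)^2,
\end{equation*}
with the maximum attained in the interior of $[0,1]$ precisely when $|L_z| \leq a|E|$; in the opposite regime $\Theta$ is monotone in $\cos^2 \theta$ with supremum equal to $Q < 0$, so no geodesic exists. Under the hypothesis $Q < 0$ one therefore obtains $a|E| - |L_z| \geq \sqrt{-Q} > 0$, which in particular forces $E \neq 0$. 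Squaring and rearranging delivers
\begin{equation*}
a^2 E^2 - L^2 \;=\; a^2 E^2 - L_z^2 - Q \;\geq\; 2 a |E| \sqrt{-Q} \;>\; 0.
\end{equation*}

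Combining these inputs, every coefficient of the quartic polynomial for $R$ is nonnegative for $r > r_+$, and the constant term is strictly positive, so $R > 0$ throughout the exterior. I expect the main obstacle to be the recognition that a purely algebraic bound based on $K \geq 0$ alone is insufficient: the configuration $E = 0$, $L_z^2 > -Q$ satisfies $K \geq 0$ but makes $R$ negative for large $r$. One really has to import the angular constraint $\Theta \geq 0$ to rule out such non-geodesic parameter values, and the sharp inequality $a|E| \geq |L_z| + \sqrt{-Q}$ obtained from the $\theta$-equation is exactly what makes the $r^2$ coefficient of $R$ nonnegative.
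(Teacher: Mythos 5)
Your proof is correct and follows essentially the same route as the paper: the same quartic expansion of $R$ in powers of $r$, and the same key observation that the sign of the $r^2$ coefficient $a^2E^2 - L_z^2 - Q$ must be controlled by importing the angular constraint $\Theta \geq 0$, since $K \geq 0$ alone does not suffice. The only difference is in packaging: the paper invokes Descartes' rule of signs and derives a contradiction from $\Theta \geq 0$ evaluated at the geodesic's actual $\theta$, whereas you maximize $\Theta$ over $\theta$ explicitly to obtain the sharp bound $a|E| \geq |L_z| + \sqrt{-Q}$ and then conclude directly that every coefficient is nonnegative with strictly positive constant term, which makes Descartes' rule unnecessary.
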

\begin{proof}
The radial function can be written as:
\begin{equation}\label{eq:radialpoly}
R(r,E,L_z,Q)=E^2 r^4 + (a^2 E^2 - Q - L_z^2)r^2 +2MKr -a^2 Q,
\end{equation}
which is clearly positive for large $r$. For the proof we make use of the Descartes rule, which states that if the terms of a polynomial with real coefficients are ordered by descending powers, then the number of positive roots is either equal to the number of sign differences between consecutive nonzero coefficients, or is less than it by an even number. Powers with zero coefficient are omitted from the series. For a proof of Descartes rule see for example \cite[p.172]{cohn_algebra._1982}. Applied to \eqref{eq:radialpoly} with $Q<0$ we get that for two zeros of $R$ to exist in $r \in (0, \infty)$ the conserved quantities of the geodesic have to satisfy the inequality:
\begin{equation}
a^2 E^2 - Q - L_z^2 < 0.
\label{eq:ineq}
\end{equation}
 Otherwise there are no zeros at all and the proposition is true. Assume the contrary. Then for geodesics with certain parameters to exist at a given point, additionally to $R\geq 0$ we also need to have that $ \Theta \geq 0$. Applying this condition to equation (\ref{eq:theta}) and combining it with inequality \eqref{eq:ineq} we obtain the following estimate:
\begin{equation}
- \cos ^4 \theta a^2 E^2 \geq -Q > 0.
\end{equation}
This is clearly a contradiction.

\end{proof}
\noindent  Lemma \ref{lem:qneg} tells us that geodesics with $Q<0$ either come from past null infinity $\scri^-$ and cross the future event horizon $\HH^+$ or come out of the past event horizon $\HH^-$ and go to future null infinity $\scri^+$. We will discuss the property of these geodesics in section \ref{sec:radialgeod}. For the rest of this section we will restrict to the case of $Q\geq0$. Note that even though some of the discussions and proofs might be simpler when working with $\mathcal K$, we are going to work with $\mathcal Q$ for all discussions that feed into the plot in section \ref{sec:app}. This seems to be the natural choice to describe trapping and the ergoregion in phase space.\\
To find the essential properties of the radial motion, we use pseudo potentials. The pseudo potential $V(r,\mathcal Q)$ is defined as the value of $\mathcal E$ such that the radius $r$ is a~turning point. In other words it is a solution to the equation:
\begin{equation}\label{eq:pseudopotdef}
R(r,V(r,\mathcal Q),1, \mathcal Q) = \Sigma ^2 \dot r ^2 = 0.
\end{equation}
This equation is quadratic in $V(r,\QQ)$ and for non-negative $\QQ$ there exist two real solutions at every radius, denoted by $V_{\pm}$. They are given by:
\begin{equation}
V_{\pm} (r,\mathcal Q) = \frac{2Mar \pm \sqrt{r \Delta ((1+\mathcal Q) r^3 + a^2 \mathcal Q (r+2M))}}{r [ r(r^2+a^2)+2Ma^2 ]}.
\label{eq:pseudopot}
\end{equation}
\begin{remark}
The pseudo potentials should not be mistaken for potentials known from classical mechanics, where the equation of motion is given by $\half \dot x^2 + V(x)=E$. However the potentials of classical mechanics can always be considered as pseudo potentials in the above sense.
\end{remark} 
\noindent The radial function can be rewritten as: 
\begin{equation}\label{eq:decomppot}
R(r,\mathcal E,1, \mathcal Q) = L_z^2 r[r(r^2+a^2) +2Ma^2] \left( \mathcal E-V_+(r,\mathcal Q) \right)  \left( \mathcal E - V_- (r,\mathcal Q) \right).
\end{equation}
This form of $R$ reveals the significance of the pseudo potentials: The only turning points that can exist for fixed $\QQ>0$ are those where either $\mathcal E = V _{+} (r,\QQ)$ or  $\mathcal E = V _{-} (r,\QQ)$.  Analyzing the properties of the $V_\pm$ allows us to extract all the information we are interested in.\\
First we note that for $r$ big enough we have that $V_+>0$ and $V_-<0$ for all $\QQ\geq0$. However in the limit we have that: 
\begin{equation}\label{eq:liminfty}
\lim _ {r \to \infty} V_ {\pm} = 0.
\end{equation}
At the horizon the limit of the pseudo potential and its derivative are given by:
\begin{equation}\label{eq:limhorizon}
\lim _{r \to r_+} V _ {\pm} (r) = \omega _H
\end{equation}
\begin{equation}
\lim _ {r \to r_+} \frac{dV_{\pm}}{dr}(r) = \pm \infty.
\end{equation}
\begin{lemma}\label{lem:onemax}
For a fixed value of $\QQ$ the pseudo potentials $V_\pm (r,\QQ)$ have exactly one extremum as a function of $r$ in the interval $(r_+,\infty)$. 
\end{lemma}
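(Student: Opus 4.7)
The plan is to characterize the critical radii of $V_\pm$ as solutions of a polynomial system, reduce them to preimages of a single rational function $\mathcal{Q}(r)$, and then count.

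Since $V_\pm(r,\mathcal Q)$ are the two distinct roots on $(r_+,\infty)$ of
\[
f(r,V) := q(r) V^2 - 4MaV + (2M-r) - \Delta \mathcal Q/r = 0,\qquad q(r) := r(r^2+a^2) + 2Ma^2,
\]
the implicit-function formula $V_\pm'(r) = -\partial_r f(r,V_\pm)/\partial_V f(r,V_\pm)$ together with $\partial_V f(r,V_\pm) \neq 0$ identifies the critical points of $V_\pm$ with the zeros of $\partial_r f(r,V_\pm)$. A direct computation gives $\partial_r f = (3r^2+a^2) V^2 - 1 - \mathcal Q (r^2-a^2)/r^2$, and eliminating $\mathcal Q$ between $f=0$ and $\partial_r f=0$ produces the quadratic
\[
(r^2+a^2) P(r) V^2 + 2Ma(r^2-a^2) V - a^2(r-M) = 0, \qquad P(r) := r^3 - 3Mr^2 + a^2 r + Ma^2,
\]
whose discriminant factors as the perfect square $r^2\Delta^2$. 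Its two roots are $V = \omega(r)$ and $V = a(M-r)/P(r)$; the former forces $\mathcal Q = -r^4/(r^2+a^2)^2 < 0$ and is ruled out, leaving $V_\pm(r) = a(M-r)/P(r)$ at every critical point with $\mathcal Q \geq 0$. Substituting this relation back into $f=0$ and solving for $\mathcal Q$ yields the compact expression
\[
\mathcal Q(r) = -\frac{r^3 g(r)}{P(r)^2}, \qquad g(r) := r(r-3M)^2 - 4Ma^2.
\]

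The counting of critical points is now equivalent to counting preimages of $\mathcal Q$ under the map $\mathcal Q:(r_+,\infty) \to \mathbb{R}$ defined above, constrained to $\mathcal Q(r) \geq 0$. Straightforward analysis of $P$ shows that $P(r_+) = -2Mr_+\sqrt{M^2-a^2} < 0$, $P$ attains its unique local extremum in $(r_+,\infty)$ at $r = M + \sqrt{M^2-a^2/3}$ (a local minimum, since $P'(r_+) = -2a^2 < 0$), and $P(\infty) = +\infty$; hence $P$ has exactly one zero $r^* \in (r_+,\infty)$. Analogously $g(r_+) = r_+(M^2-a^2) > 0$, $g(3M) = -4Ma^2 < 0$, and $g(\infty) = +\infty$, so $g$ has exactly two zeros $r_2 \in (r_+,3M)$ and $r_3 \in (3M,\infty)$. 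The inclusion $r^* \in (r_2, r_3)$ follows from continuity from the Schwarzschild limit (where $r_2 = r^* = r_3 = 3M$), together with the algebraic identity $g(r^*) = -3Mr^{*2} + (9M^2-a^2)r^* - 5Ma^2$ (obtained by using $P(r^*) = 0$), which can be checked to be strictly negative. Thus $\mathcal Q(r) \geq 0$ iff $r \in [r_2, r_3]$, and on each of $(r_2,r^*)$ and $(r^*, r_3)$ the function $\mathcal Q$ is smooth, extends continuously to $0$ at the $g$-zero endpoint, and tends to $+\infty$ at $r^*$. On $(r_2, r^*)$ the critical value $a(M-r)/P > 0$ (since $P < 0$, $M < r$), identifying it with the maximum of $V_+$; on $(r^*, r_3)$ this value is negative, identifying it with the minimum of $V_-$.

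The main remaining step is to prove that $\mathcal Q(r)$ is strictly monotone on each of $(r_2, r^*)$ and $(r^*, r_3)$. Granting this, the intermediate value theorem supplies, for every $\mathcal Q \geq 0$, exactly one preimage in each subinterval, giving one critical point of $V_+$ and one of $V_-$. Since $V_+(r_0) \neq V_-(r_0)$ for every $r_0 > r_+$, each critical radius belongs to exactly one branch, so $V_+$ has no extremum besides its unique maximum and $V_-$ has none besides its unique minimum. The monotonicity reduces to showing that the numerator of $\mathcal Q'(r)$, an explicit polynomial in $r$ expressible through $P, g, P', g'$, does not change sign in either subinterval --- the main technical obstacle, settled by an elementary but somewhat lengthy algebraic calculation.
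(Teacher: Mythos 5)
Your reduction is correct as far as it goes: eliminating $\mathcal Q$ between $f=0$ and $\partial_r f=0$ does land you on $V=\omega(r)$ (excluded since it forces $\mathcal Q<0$) or $V=a(M-r)/P(r)$, and substituting back reproduces exactly the parametrization $\mathcal E_{trap}$, $\mathcal Q_{trap}$ of equations \eqref{eq:E_trap}--\eqref{eq:Q_trap} (your $P$ is the paper's $A$, your $r^3 g$ is the paper's $B$, and your $r_2,r^*,r_3$ are the paper's $r_1,r_3,r_2$). The problem is that you have not proved the lemma: the entire argument funnels into the claim that $\mathcal Q(r)=-r^3g(r)/P(r)^2$ is strictly monotone on each of the two subintervals, and you dispose of this with "settled by an elementary but somewhat lengthy algebraic calculation" that is never performed. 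Without strict monotonicity the intermediate value theorem only gives \emph{at least} one preimage per subinterval, so $V_+$ could a priori have three extrema for some $\mathcal Q$, which is precisely what the lemma must rule out. This step is not a formality: writing $\mathcal Q'=-\bigl(B'A-2BA'\bigr)/A^3$, the numerator works out to $4a^2r^2\bigl(r^4-6M^2r^2+(9M^3-Ma^2)r-3M^2a^2\bigr)$, and one must show this quartic is positive on the whole interval between the outer roots of $g$ (note it actually vanishes at $r=r_+$ in the extremal limit $a=M$, so the positivity is not robust and needs a genuine argument using subextremality). There is also a circularity hazard to be aware of: the paper \emph{derives} the monotonicity of $r_{trap}^\pm(\mathcal Q)$ as a consequence of this lemma, so you cannot borrow it. A secondary, minor point: positivity of the critical value $a(M-r)/P$ does not by itself place the critical point on the $V_+$ branch (the $V_-$ branch is also positive inside the ergoregion); the correct discriminator is comparison with $\omega(r)$ via \eqref{eq:omega_forbidden}, though your final counting ("two critical radii total, each branch needs at least one") survives without it.

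For contrast, the paper avoids the monotonicity question entirely by counting roots of the quartic $R(r,\mathcal E,1,\mathcal Q)$ directly: since $V_\pm$ agree at $r_+$ and at infinity, $R\geq 0$ at both ends and hence has an even number of zeros in $(r_+,\infty)$; Descartes' rule bounds the positive roots by three; and the signs $R(0)\leq 0$, $R(r_+)\geq 0$ force one root into $[0,r_+]$, leaving at most two in $(r_+,\infty)$ --- incompatible with either potential having more than one extremum. If you want to keep your route, you must supply the positivity proof for the quartic above; otherwise the paper's counting argument is the shorter path.
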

\begin{proof}
It is clear from the above properties that  $V_+$ ($V_-$) has at least one maximum (minimum) in the DOC. From the fact that the two pseudo potentials have the same limiting value at $\infty$ and at $r_+$ together with \eqref{eq:decomppot} we get that in both limits we have that $R(r,\mathcal E,1,\mathcal Q)\geq0$. Therefore $R(r,\mathcal E,1,\mathcal Q)$ has to have an even number of zeros in the interval $(r_+, \infty)$. Given the fact that $R(r,\mathcal E,1,\mathcal Q)$ is a fourth order polynomial it can have at most 4 zeros. From the asymptotic behaviour of the potentials $V_{\pm}$ we get that they need to have an odd number of extrema. Therefore if for some value of $\mathcal Q$ one of the potentials has more than one extremum there exists $\mathcal E$ such that $R(r,\mathcal E,1,\mathcal Q)$ has three zeros in $r \in (r_+,\infty)$. Applying Descartes rule to \eqref{eq:radial} we infer that $R(r,\mathcal E,1,\mathcal Q)$ can have at most three zeros in $r \in [0,\infty)$. But $R(0,\mathcal E,1,\mathcal Q) \leq 0$ and $R(r_+,\mathcal E,1,\mathcal Q) \geq 0$. Hence there is at least one zero of $R(r,\mathcal E,1,\mathcal Q)$ in the interval $[0,r_+]$, so it is impossible for $R(r,\mathcal E,1,\mathcal Q)$ to have three zeros in $(r_+,\infty)$. From that it follows that  $V_\pm$ can both only have one extremum in that interval. 
\end{proof}
\noindent Stationary points occur at the extrema of the pseudo potentials. So Lemma \ref{lem:onemax} tells us that for every fixed value of $\mathcal Q \geq 0$ there exist exactly two spherical geodesics with radii $r=r_{trap} ^{\pm}$ and energies $\mathcal{E}_{trap \pm} =V _{\pm} (r_{trap} ^ {\pm})$. They will be studied in depth in section \ref{sec:trapped}. Bounded geodesics with non-constant $r$ would only be possible between two extrema of one of the pseudo potentials. These are excluded by the Lemma. \\
From (\ref{eq:radial}) we have that for any geodesics to exist we have to have $R(r,\mathcal E,1,\mathcal Q) \geq 0$. This condition is satisfied except if  $V_- (r,\QQ) < \mathcal E < V_+(r,\QQ)$. This set is therefore a forbidden region in the $(r, \mathcal E)$ plane.  Furthermore it follows that $R(r,\mathcal E,1,\mathcal Q)\leq0$ for $\mathcal E=\omega (r)$ with equality only in the limits $r \rightarrow r_+$ and $r \rightarrow \infty$. Therefore we have that: 
\begin{equation}
V_-(r) \leq \omega (r) \leq V_+ (r).
\label{eq:omega_forbidden}
\end{equation}
Again, the equality can occur only at the horizon and in the limit $r \to \infty$. This fact combined with \eqref{eq:future_directed_E} shows that for future pointing null geodesics
\begin{equation}
\mathrm{sgn} ( L_z ) = 
\begin{cases} 
+1 & \mbox{if } \mathcal E \geq V_+(r), \\
-1 & \mbox{if } \mathcal E \leq V_-(r).
\end{cases} 
\label{eq:Lz_sign}
\end{equation}
Therefore the pseudopotential $V_+$ determines the behaviour of co-rotating null geodesics and $V_-$ that of counter-rotating ones. Furthermore it is worth noting that $\mathcal E \geq V_+(r)$ implies $E>0$. 
Finally we observe that for every fixed radius $r\geq r_+$ we get from inspection of \eqref{eq:pseudopot} that: 
\begin{equation}
 \frac {\partial V_ {-}}{\partial \mathcal Q} \leq 0 \leq \frac {\partial V_ {+}}{\partial \mathcal Q}
 \label{eq:Forbidden_Region_Expansion} 
\end{equation}
holds. The equality in the relation occurs again only in the limits $r \to r_+$ and  $r \to \infty$. This means that for every radius $r>r_+$ the range of forbidden values of $\mathcal{E}$ is strictly expanding as $\mathcal Q$ increases. This fact will be quite useful for the considerations in section \ref{sec:trapped}.

\subsection{The \texorpdfstring{$\theta$}{theta} Equation}\label{sec:thetaeq}

In Schwarzschild spacetimes, due to spherical symmetry the motion of any geodesics is contained in a plane. This means that for every geodesic there exists a spherical coordinate system in which it is constrained to the equatorial plane $\theta = \frac{\pi}{2}$. This is no longer true in Kerr spacetimes, but most geodesics are still constrained in $\theta$ direction. The allowed range of $\theta$ is obtained by solving the inequality $\Theta (\theta,E,L_z,Q) \geq 0$. After multiplication with $\sin^2 (\theta)$, $\Theta (\theta,E,L_z,Q)$ can be expressed as a quadratic polynomial in the variable $\cos^2(\theta)$. Hence $\Theta (\theta,E,L_z,Q) = 0$ has two solutions given by:
\begin{equation}
\cos ^2 \theta _{turn} = \frac{a^2 E^2 - L_z^2-Q \pm \sqrt{(a^2E^2-L_z^2-Q)^2+4a^2E^2Q}}{2a^2E^2}.
\end{equation}
For $Q > 0$ only the solution with the plus sign is relevant and the motion will always be contained in a band $\theta _{min} < \theta_{eq} < \theta _{max}$ symmetric about the equator $\theta_{eq} = \frac{\pi}{2}$. As $|L_z|$ increases, this band shrinks. In fact only in the case $L_z=0$ it is possible for a geodesic to reach the poles $\theta=0$, $\theta=\pi$. Otherwise $\Theta(\theta,E,L_z,Q)$ blows up to $- \infty$ there. If $Q < 0$ both solutions are positive and the inclination of the geodesic with respect to the equator is also constrained away from the equator, so either $ \theta_{eq} <  \theta _{min}< \theta _{max}$ or $ \theta_{eq} >  \theta _{max} > \theta _{min}$ . These trajectories are contained in disjoint bands which intersect neither the equator nor the pole. These bands can degenerate to circles, i.e. there exist null geodesics which stay at $\theta=\mathrm{const.}$ The relevance of these trajectories and their connection to the Schwarzschild case will be discussed in the next section. All possibilities for the potentials that constrain the motion in $\theta$ direction are summarized in the Figure \ref{fig:ThetaFunction}.

\begin{figure}[t!]
\centering
\includegraphics[width=120mm]{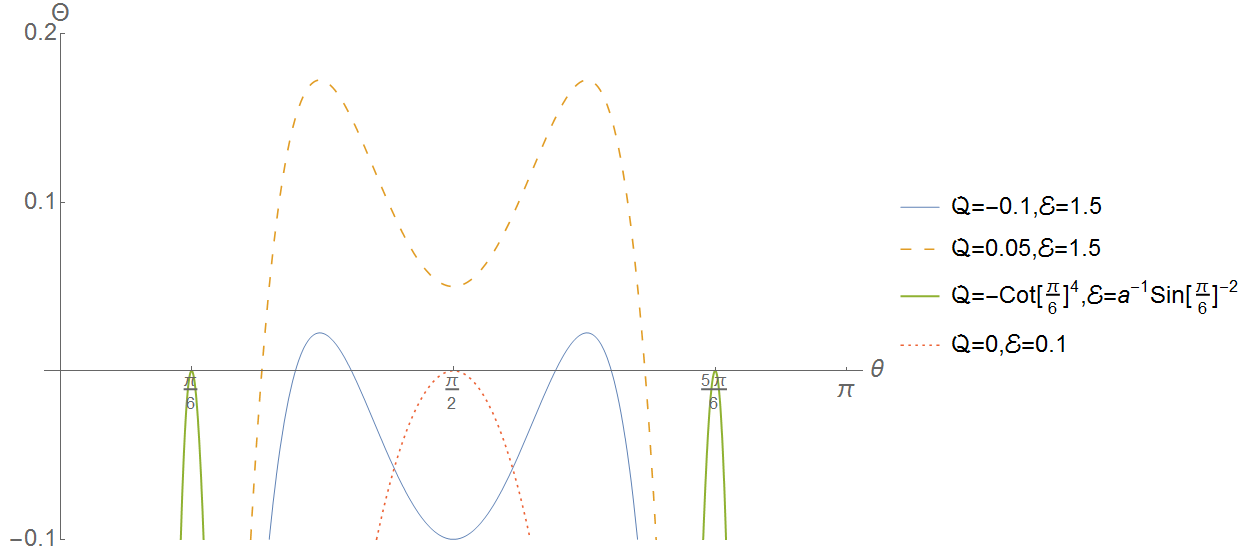}
\caption{This figure shows shapes of function $\frac{\Theta}{L_z^2}$ for four choices of values of conserved quotients.}
\label{fig:ThetaFunction}
\end{figure}

\section{Special Geodesics}\label{sec:specialgeod}
We will now apply the discussions of the last section to describe a number of special geodesics in Kerr geometries. All of these are in some way related to either the black hole stability problem or the black hole uniqueness problem. 

\subsection{Radially In-/Out-going Null Geodesics}\label{sec:radialgeod}
In this section we find geodesics which generalize the radially ingoing and outgoing congruences in Schwarzschild spacetimes. In section \ref{sec:radialeq} we saw that the geodesics with $Q<0$ extend from the horizon to infinity. In section \ref{sec:thetaeq} we saw that $Q<0$ is again a special case, as these null geodesics can never intersect the equator and in the extreme case are even constrained to a fixed value of $\theta$. At first this behaviour seems odd, but a similar situation can be observed in the Schwarzschild case. If we look at geodesics which move in a plane with inclination $\theta_0$  with respect to the equatorial plane, we see that there exists a set of null geodesics with similar properties as the ones with $Q<0$ in Kerr. It is clear that the radially ingoing geodesics which move orthogonally to the axis around which the plane of motion was rotated, move at fixed $\theta$ value, namely that at which the plane is inclined with respect to the equatorial plane, hence $\theta= \pi/2 \pm \theta_0$. Furthermore some null geodesics reach the horizon before intersecting the equatorial plane. They don't necessarily move at fixed $\theta$ but their motion in $\theta$ direction is still constrained away form the equatorial plane and away from the poles of the coordinate system.\\
Now we want to investigate the null geodesics which move at a fixed $\theta$ in Kerr. Demanding that $\theta=\mathrm{const.}$ is equivalent to requiring $\Theta=\frac{d}{d\theta}\Theta =0$. From these conditions we obtain:
\begin{subequations}
\begin{align}
L_z&= \pm  aE \sin ^2 \theta, \\
Q&=-a^2 E^2 \cos ^4 \theta, \\
K&=0.
\end{align}
\label{eq:geodesic_principal}
\end{subequations}
By choosing the plus sign for $L_z$ in the above equation, it follows  from the remaining equations of motion that:
\begin{subequations}
\begin{align}
\frac{\dot \phi}{\dot t}=\frac{d\phi}{dt} &= \omega(r), \\
\frac{\dot r}{\dot t}=\frac{dr}{dt} &= \pm \frac{\Delta}{r^2+a^2}.
\end{align}
\end{subequations}
This congruence is generated by the principal null directions $e_0 \pm e_1$. When choosing the minus sign for $L_z$ in \eqref{eq:geodesic_principal} the remaining equations of motions can not be simplified in a similar manner. In the case $a=0$ these are the radially in-/outgoing null geodesics. An interesting observation is that along these geodesics the inner product of the $(\partial_t)^\mu$ vector field is monotone. A simple calculation shows that:
\begin{equation}
\dot\gamma^\mu\nabla_\mu ((\partial_t)^\nu(\partial_t)_\nu) = \dot r \frac{2M (r^2 -a^2 \cos^2 \theta)}{\Sigma^2}+ \dot\theta \frac{2Ma^2 r \sin 2\theta}{\Sigma^2}.
\end{equation}
For the principal null congruence we have $\dot\theta=0$, the coefficient of $\dot r$ is positive and there is no turning point in $r$. This property might be interesting in the context of the black hole uniqueness problem. If one could show a similar monotonicity statement for a congruence of null geodesics in general stationary black hole spacetimes, one could conclude that the ergosphere in such spacetimes has only one connected component enclosing the horizon. This is a necessary condition if one wants to show that no trapped $\mathbf{T}$-orthogonal null geodesics can exist in that case. 

\subsection{The Trapped Set}\label{sec:trapped}
One of the most important features of geodesic motion in black hole spacetimes is the possibility of trapping. A geodesic is called trapped if its motion is bounded in a spatially compact region away from the horizon. In~Kerr in~Boyer-Lindquist coordinates this corresponds to the geodesics motion being bounded in $r$ direction. This is only possible if $r=\mathrm{const.}$ or if the motion is between two turning points of the radial motion. For null geodesics in Kerr we ruled out the second option in Lemma \ref{lem:onemax}. We will now discuss orbits of constant radius.\footnote{Null geodesics of constant radius are often referred to as "spherical null geodesics" but it is important to note that $r=\mathrm{const.}$ does not imply that the whole sphere is accessible for such geodesics.} These null geodesics are stationary points of the radial motion, hence null geodesics with $\dot r= \ddot r=0$. Dividing equation \eqref{eq:radial} by $\Sigma$ and taking the derivative with respect to $\lambda$ we see that this condition is equivalent to $R(r)=\frac{d}{dr}R(r)=0$. The solutions to these equations can be parametrized explicitly by, cf. \cite{teo_spherical_2003}:
\begin{equation}
\mathcal E _{trap} (r) = - \frac{a(r-M)}{A(r)} = \omega (r) \left( 1 - \frac{2r \Delta }{A(r)} \right)
\label{eq:E_trap}
\end{equation}
\begin{equation}
\mathcal Q _{trap} (r) = - \frac{B(r)}{A^2(r)}
\label{eq:Q_trap}
\end{equation}
\begin{equation}
A(r)=r^3-3Mr^2+a^2r+a^2M=(r-r_3)P_2(r)
\end{equation}
\begin{equation}
B(r)=r^3(r^3-6Mr^2+9M^2r-4a^2M)=(r-r_1)(r-r_2)P_4(r)
\end{equation}
where $P_2$ and $P_4$ are polynomials in $r$, quadratic and quartic respectively, which are strictly positive in the DOC. The following three radii are particularly important:
\begin{equation}
r_1=2M\left(1+ \cos \left( \frac{2}{3} \arccos\left(-\frac{a}{M}\right) \right) \right)
\end{equation}
\begin{equation}
r_2=2M\left(1+ \cos \left( \frac{2}{3} \arccos\left(\frac{a}{M}\right) \right) \right)
\end{equation}
\begin{equation}
r_3=M+2 \sqrt{M^2-\frac{a^2}{3}} \cos \left ( \frac{1}{3} \arccos
\left( \frac{M(M^2-a^2)}{(M^2-\frac{a^2}{3})^{\frac{3}{2}}} \right) \right)
\end{equation}
satisfying the inequalities:
\begin{equation}
M<r_+<r_1<r_3<r_2<4M 
\label{eq:radial_inequalities}
\end{equation}
for $a\in(0,M)$. Orbits of constant radius are allowed only inside the interval $[r_1,r_2]$, because outside of it $Q$ would have to be negative. This possibility has already been excluded in Lemma \ref{lem:qneg}. The boundaries of the interval at $r=r_1$ and $r=r_2$ correspond to circular geodesics constrained to the equatorial plane with $Q=0$. The trapped null geodesics at $r=r_3$ have $L_z=0$ which is the reason why the  functions $\mathcal E _{trap}$ and $\mathcal Q _{trap}$ blow up there. From the second representation in (\ref{eq:E_trap}) we see that $\mathcal{E} _{trap} (r) - \omega (r)$ is positive in $[r_1,r_3)$ and negative in $(r_3,r_2]$. Combined with (\ref{eq:Lz_sign}) this implies that the stationary points in $[r_1,r_3)$ correspond to extrema of $V_+$ and the stationary points in $(r_3,r_2]$ correspond to extrema of $V_-$. In Lemma \ref{lem:onemax} we showed that $V_+$ and $V_-$ both have exactly one extremum. Since extrema of the pseudo potentials always correspond to orbits of constant radius, we get that the extrema of $V_+(r,\QQ)$ and $V_-(r,\QQ)$ have to be within the intervals $[r_1,r_3)$ and $(r_3,r_2]$ respectively for any value of $\QQ$. In Figure \ref{fig:radii} we plot the behaviour of these intervals as a function of $a/M$.\\
\begin{figure}[t!]
\centering
\includegraphics[width=120mm]{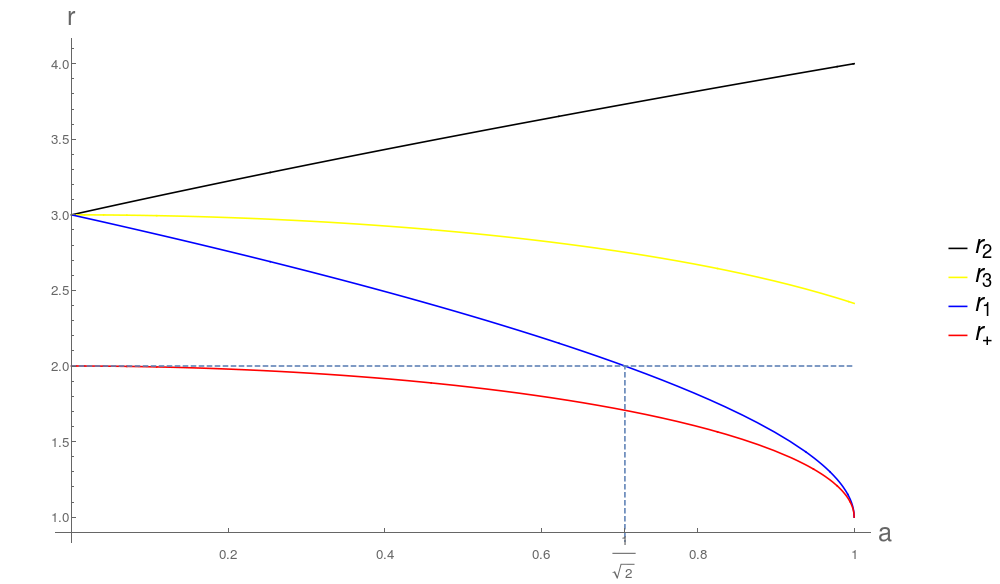}
\caption{Plot of the relation between the radius of the equatorial trapped null geodesics at $r_1$ and $r_2$, the trapped null geodesic with $L_z=0$ at $r_3$ and the horizon at $r_+$ for all values of $a$.}
\label{fig:radii}
\end{figure}

\noindent We now know that the maps given by: 
\[ [0,\infty ) \ni \mathcal Q \mapsto r_{trap} ^ + \in [r_1,r_3)  \]
\[ [0,\infty ) \ni \mathcal Q \mapsto r_{trap} ^ - \in (r_3,r_2] \]
which take $\mathcal Q$ into radii of trapped geodesics corresponding to the unique maximum of $V_+(r,\mathcal Q)$ and minimum of $V_-(r, \mathcal Q)$ respectively are one-to-one and therefore monotone. By using \eqref{eq:Q_trap} the sign of their derivatives can be easily evaluated in some $\epsilon$-neighbourhood of $r=r_3$ where the term of highest order in $\frac{1}{r-r_3}$ dominates:
\begin{equation}
 \frac{\partial r_{trap} ^{-}}{\partial \mathcal Q} < 0 < \frac{\partial r_{trap} ^{+}}{\partial \mathcal Q}.
\end{equation}
By the equation (\ref{eq:Forbidden_Region_Expansion}) and the fact that radii of trapping always correspond to global extrema of the pseudo potentials we get that: 
\begin{equation}
\frac{\partial}{\partial \mathcal Q} \mathcal E _{trap} (r _{trap} ^{-} (\mathcal Q))<0<\frac{\partial}{\partial \mathcal Q} \mathcal E _{trap} (r _{trap} ^{+} (\mathcal Q)).
\end{equation}
Using the chain rule and combining these two facts we obtain:
\begin{equation}
\frac{\partial \mathcal E _{trap}}{\partial r} > 0.
\end{equation}
These inequalities provide an important piece of the picture of the influence of $\mathcal Q$ on the trapped geodesics. We have $\mathcal Q=0$ for the outermost circular geodesics and as we increase it, the radii of trapping converge towards $r=r_3$ while $\mathcal E$ blows up to $\pm \infty$, with the sign depending on the direction from which we approach $r_3$. We can also desciribe the function $\mathcal E _{trap} (r)$: it starts with some finite positive value at $r=r_1$ and increases monotonically to $+ \infty$ as $r$ approaches $r_3$. There it jumps to $- \infty$ and increases again to a finite negative value at $r=r_2$. \\
It is interesting to ask what region in physical space is accessible for trapped geodesics. By plugging (\ref{eq:Q_trap}) and (\ref{eq:E_trap}) into the equation $\Theta = 0$ we get that for a geodesic with $r=\mathrm{const.}$:

\begin{equation}
\cos ^2 \theta_{turn}=\frac{2\sqrt{Mr^2 \Delta (2r^3 -3Mr^2 +a^2 M)}-r(r^3-3M^2 r +2 a^2 M)}{a^2(r-M)^2}
\label{eq:theta_turn}
\end{equation}
holds. This gives two turning points in $\theta$ direction which are symmetric about the equatorial plane.
The whole region of trapping in the $(r,\theta)$ plane is bounded by curves defined implicitly by (\ref{eq:theta_turn}) and $r_1 \leq r \leq r_2$. Figure (\ref{fig:AreaOfTrapping}) presents this set for a particular value of $a$.

\begin{figure}[t!]
\centering
\includegraphics[width=120mm]{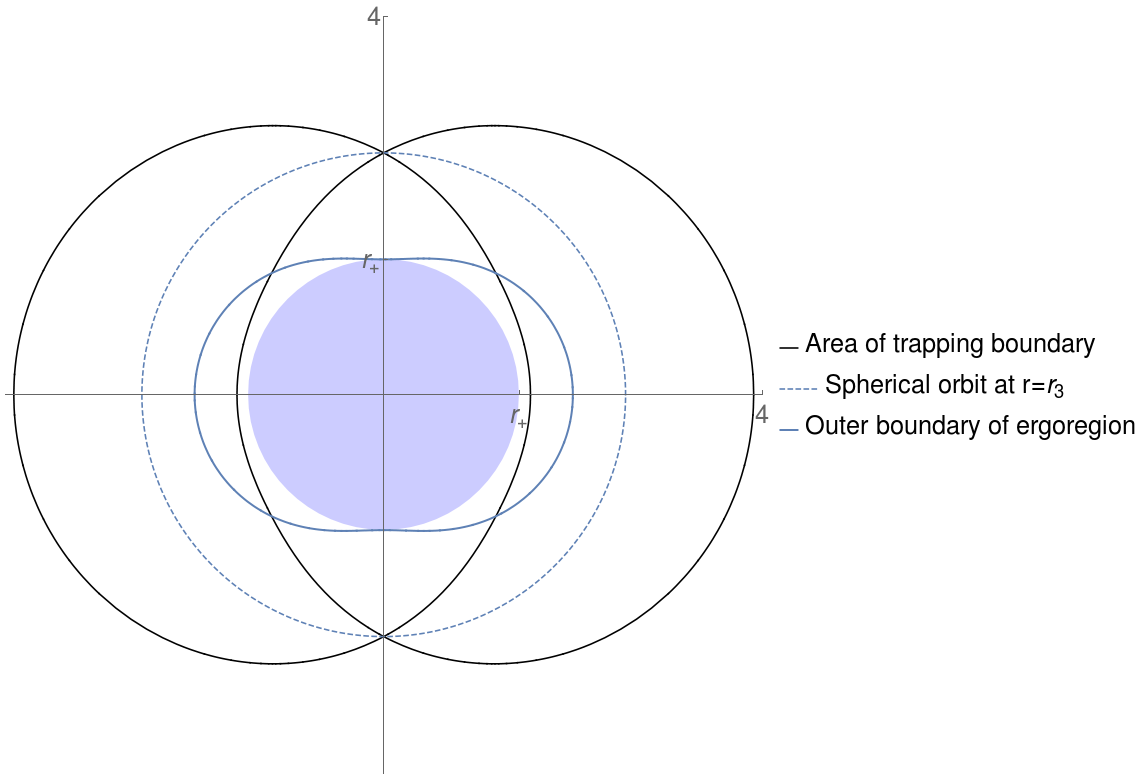}
\caption{The region accessible for trapped null geodesics for $a=0.902$. The shaded region represents the black hole, $r \leq r_+$. The only qualitative change in this picture occurs at $a=\frac{1}{\sqrt{2}}$, because at this value the region of trapping starts intersecting with the ergoregion.}
\label{fig:AreaOfTrapping}
\end{figure}
\begin{remark}
Two warnings: \begin{enumerate}
    \item  One has to be careful when interpreting Figure \ref{fig:AreaOfTrapping} (and the plots in the Mathematica notebook). Despite the fact that the region in physical space occupies finite range of $r$ values, every individual trapped null geodesic is still constrained to a fixed radius. For an insight on what those trajectories look like in detail we recommend the study of \cite{teo_spherical_2003}.
    \item When taking $a\rightarrow M$ in the Mathematica notebook the ergosphere appears to develop a kink on the rotation axis. This is an artifact of the coordinate system, as the ergosphere coincides with the horizon there and is thus orthogonal to itself. 
\end{enumerate}
\end{remark}
\subsection{\texorpdfstring{$\mathbf T$}{T}-Orthogonal Null Geodesics}\label{sec:tortho}
In the ergoregion there exist null geodesics with negative values of $E$. In physical space they are constrained to the region defined by equation \eqref{eq:ergopshere}. From Lemma \ref{lem:qneg} we know that geodesics with $Q<0$ reach either $\scri^+$ or come from $\scri^-$, and can therefore not have negative values of $E$. This allows us to use the pseudo potentials to give  a more precise characterization of the ergoregion in phase space. It is located in the region where $V_-(\mathcal Q) >0$, between $\mathcal{E}=0$ and $V_-(\mathcal{Q})$. An immediate consequence of that is, that all future pointing null geodesics with negative $\mathcal{E}$ begin at the past event horizon and end at the future event horizon. Furthermore they must have $L_z<0$. Those null geodesics with $\mathcal{E}=0$ can reach the boundary of the ergoregion. In this case equation \eqref{eq:theta} gives us, that:
\begin{equation}
\mathcal{Q} = \frac{\cos^2(\theta_{max})}{\sin^2(\theta_{max})}.
\end{equation}
When calculating the turning points from equation \eqref{eq:radial} we get that: 
\begin{equation}
\sin^2(\theta_{max}) R\left(r,0,1,\frac{\cos^2(\theta_{max})}{\sin^2(\theta_{max})}\right)=-r^2 + 2Mr -a^2\cos^2(\theta_{max})=0.
\end{equation}
The only solution to this equation in the exterior region is: 
\begin{equation}
r_{turn}(\theta_{max})= M+\sqrt{M^2-a^2\cos^2(\theta_{max}) }
\end{equation}
which is exactly the location of the ergo sphere \eqref{eq:ergopshere}. So $V_-(\mathcal{Q})>0$ can be considered as the boundary of the ergoregion in phase space. From this considerations we see immediately that $\mathbf{T}$-orthogonal null geodesics are clearly non-trapped in Kerr. In fact there do not even exist any trapped null geodesics orthogonal to:
\begin{equation}
K^\nu =(\partial_t)^\nu+ \epsilon_{min} (\partial_\phi)^\nu
\end{equation} where $\epsilon_{min}= min [ |V_+(0, r_1)|,|V_-(0, r_2)|]$. See \cite{claudio} for informal lecture notes on the $\mathbf{T}$-orthogonal trapping problem in stationary black hole spacetimes. 

\section{Trapping as a Set of Directions}\label{sec:sphere}
 In this section we will link the previous discussion to the black hole shadows. We introduce a more formal framework for the discussion. This allows us to give a~more technical discussion of the trapped sets in Schwarzschild and Kerr black holes as well as proving the existence of trapping for general black hole spacetimes. 
\subsection{Framework}
First we have to introduce the basic framework and notations. Let $\mathcal{M}$ be a smooth manifold with Lorenzian metric $g$. At any point $p$ in $\mathcal{M}$ one can find an orthonormal basis $(e_0,e_1,e_2,e_3)$ for the tangent space, with $e_0$ being the timelike direction. It is sufficient to treat only future directed null geodesics as the past directed ones are identical, up to a sign flip in the parametrization. The~tangent vector to any future pointing null geodesic can be written as:
\begin{equation}\label{eq:tangentplus}
\dot\gamma( k|_p)|_p =\alpha \cdot ( e_0+ k_1 e_1+ k_2 e_2+ k_3 e_3)
\end{equation} 
where $\alpha= -g(\dot\gamma,e_0)$ and $k =(k_1,k_2,k_3)$ satisfies $|k|^2 =1$, hence $k\in S^2$. The~geodesic is independent of the scaling of the tangent vector as this corresponds to an affine reparametrization for the null geodesic. We will therefore set $\alpha=1$ in the following discussion. The $S^2$ is often referred to as the celestial sphere of a timelike observer at $p$ whose tangent vector is given by $e_0$, cf. \cite[p.8]{penrose_spinors_1987}.\\
For the further discussions we fix the tetrads. We can make the following definition:
\begin{definition}\label{def:gammak}Let $\gamma (k|_p)$ denote the null geodesic through $p$ for which the tangent vector at $p$ is given by equation \eqref{eq:tangentplus}.
\end{definition}
\noindent It is clear that $\gamma(k_a|_p)$ and $\gamma (k_b|_p)$ are equivalent up to parametrization if $k_a=k_b$. Suppose now that $\mathcal{M}$ is the exterior region of a black hole spacetime with complete future and past null infinity $\scri^\pm$ and a boundary given by the future and past event horizon $\HH^+\cup\HH^-$. We can then define the following sets on $S^2$ at every point $p$.
\begin{definition}\label{def:futinf}
The future infalling set: $\Omega_{\mathcal{H}^+}(p):=  \{k\in S^2 | \gamma(k|_p)\cap\mathcal{H}^+ \neq \emptyset \} $.\\
The future escaping set: $\Omega_{\mathcal{I}^+}(p):=  \{k\in S^2| \gamma(k|_p)\cap\mathcal{I}^+ \neq \emptyset \}$ .\\
The future trapped set: $\mathbb{T}_+(p) := \{k\in S^2 | \gamma(k|_p)\cap(\mathcal{H}^+\cup \mathcal{I}^+)  = \emptyset \}$.\\
The past infalling set: $\Omega_{\mathcal{H}^-}(p):=  \{k\in S^2 | \gamma(k|_p)\cap\mathcal{H}^- \neq \emptyset \}$.\\
The past escaping set: $\Omega_{\mathcal{I}^-}(p):= \{k\in S^2 | \gamma(k|_p)\cap\mathcal{I}^- \neq \emptyset \}$.\\
The past trapped set: $\mathbb{T}_- (p):= \{k\in S^2 | \gamma(k|_p)\cap(\mathcal{H}^-\cup \mathcal{I}^-)  = \emptyset \}$
\end{definition}
\noindent An illustration of these sets is given in Figure \ref{fig:causal}. We finish the section by defining the trapped set to be: 
\begin{definition}
The trapped set: $\mathbb{T}(p):= \mathbb{T}_+(p)\cap\mathbb{T}_-(p)$.
\end{definition}
\noindent The region of trapping in the manifold $\mathcal{M}$ is then given by:
\begin{definition}
Region of trapping: $\mathcal{A}:= \{p\in \mathcal{M}| \mathbb{T}(p)\neq \emptyset\}$.
\end{definition}

\begin{figure}[t!]
\centering
 \subfloat[\label{fig:causal1}]{%
 \includegraphics[width=.47\textwidth]{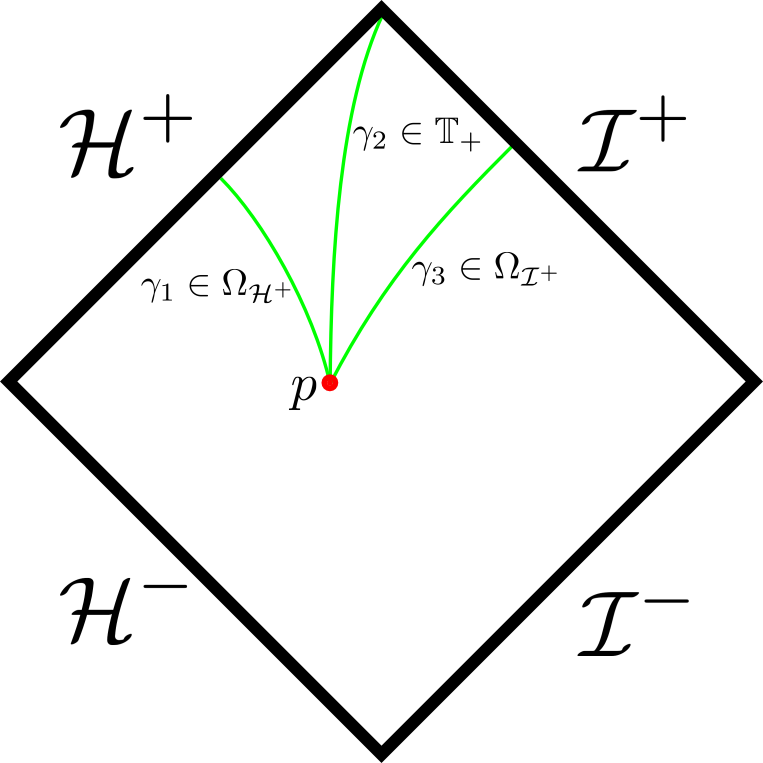}}\hfill
\subfloat[\label{fig:causal2}]{%
  \includegraphics[width=.47\textwidth]{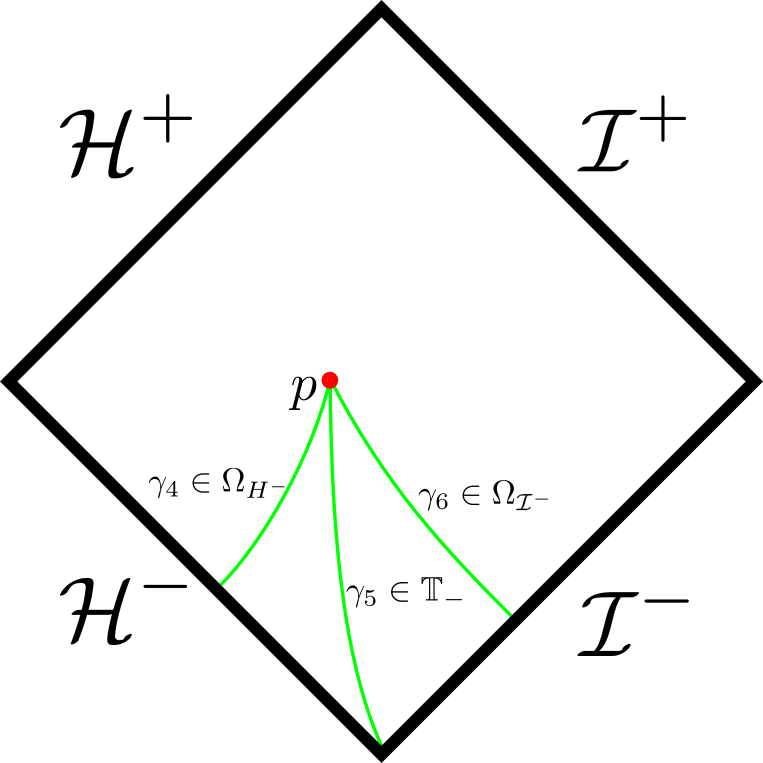}}\hfill
\caption{Conformal diagrams giving a schematic representation of elements of the sets in Definition \ref{def:futinf}. }
\label{fig:causal}
\end{figure}

\subsection{The trapped sets in Schwarzschild}\label{sec:ssshadow}
The discussion of Schwarzschild serves as an easy example for the various concepts.
\begin{definition}
We refer to the set $\Omega_{\mathcal{H}^-}(p)\cup \mathbb{T}_-(p)$ as the shadow of the black hole. 
\end{definition}
\noindent However for any practical purposes information about its boundary which is given by $\mathbb{T}_-(p)$ is good enough. An observer in the exterior region can only observe light from background sources along directions in the set $\Omega_{\mathcal{I}^-}(p)$. From the radial equation we get immediately that if $k=(k_1,k_2,k_3)\in \mathbb{T}_+(p)$ then  $k=(-k_1,k_2,k_3)\in \mathbb{T}_-(p)$. Hence the properties of the past and the future sets are equivalent. This is true both in Schwarzschild and Kerr. \\An explicit formula for the shadow of a Schwarzschild black hole was first given in \cite{synge_escape_1966}. In Schwarzschild the orthonormal tetrad \eqref{eq:tetrad} reduces to:

\begin{align}\label{eq:sstetrad}
e_0 &= \frac{1}{\sqrt {1-2M/r}} \partial _t ,&\qquad e_1&=\sqrt{1-2M/r}\partial_r, \\ \nonumber
e_2&=\frac{1}{r} \partial_{\theta},&\qquad e_3&=\frac{1}{r \sin \theta} \partial_{\phi}. 
\end{align}

To determine the structure of $\mathbb{T}_\pm(p)$ in the Schwarzschild case, it is sufficient to consider $p$ in the equatorial plane and $k=(\cos\Psi,0,\sin\Psi)$  with $\Psi \in [0,\pi]$. The entire sets $\mathbb{T}_\pm(p)$ are then obtained by rotating around the $e_1$ direction. Note that from the tetrad it is obvious that $E(k)=E(r)$ is independent of $\Psi$. On the other hand $L_z(k)$ is zero for $\Psi=\{0,\pi\}$ and maximal for $\Psi=\pi/2$. Away from that maximum, $L_z$ is a monotone function of $\Psi$. Note that the geodesic that corresponds to $\Psi=\pi/2$ has $k_1=0$ and thus a radial turning point. Thus the $E/L_z$ value of this geodesic corresponds to the minimum value any geodesic can have at this point in the manifold. For $r\neq 3M$ this is smaller then the value of trapping and thus there exist two $k$ with the property that $E/L_z(k)= 1/\sqrt {27 M^2}$. One of them has $k_1>0$ and therefore $\dot r >0$ and one has $\dot r<0$. For $r>3M$ the first corresponds to $\mathbb{T}_-(p)$ and the second corresponds to $\mathbb{T}_+(p)$. The roles are reversed for $2M<r<3M$. For $r=3M$ we have $\mathbb{T}_+(p)=\mathbb{T}_-(p)=(0,k_2,k_3)$. In~Figure \ref{fig:ssshadows} we depict these three cases for some fixed radii.
\begin{figure}[t!]
\centering
\centering
  \subfloat[\label{fig:ss3}]{%
    \includegraphics[width=.27\textwidth]{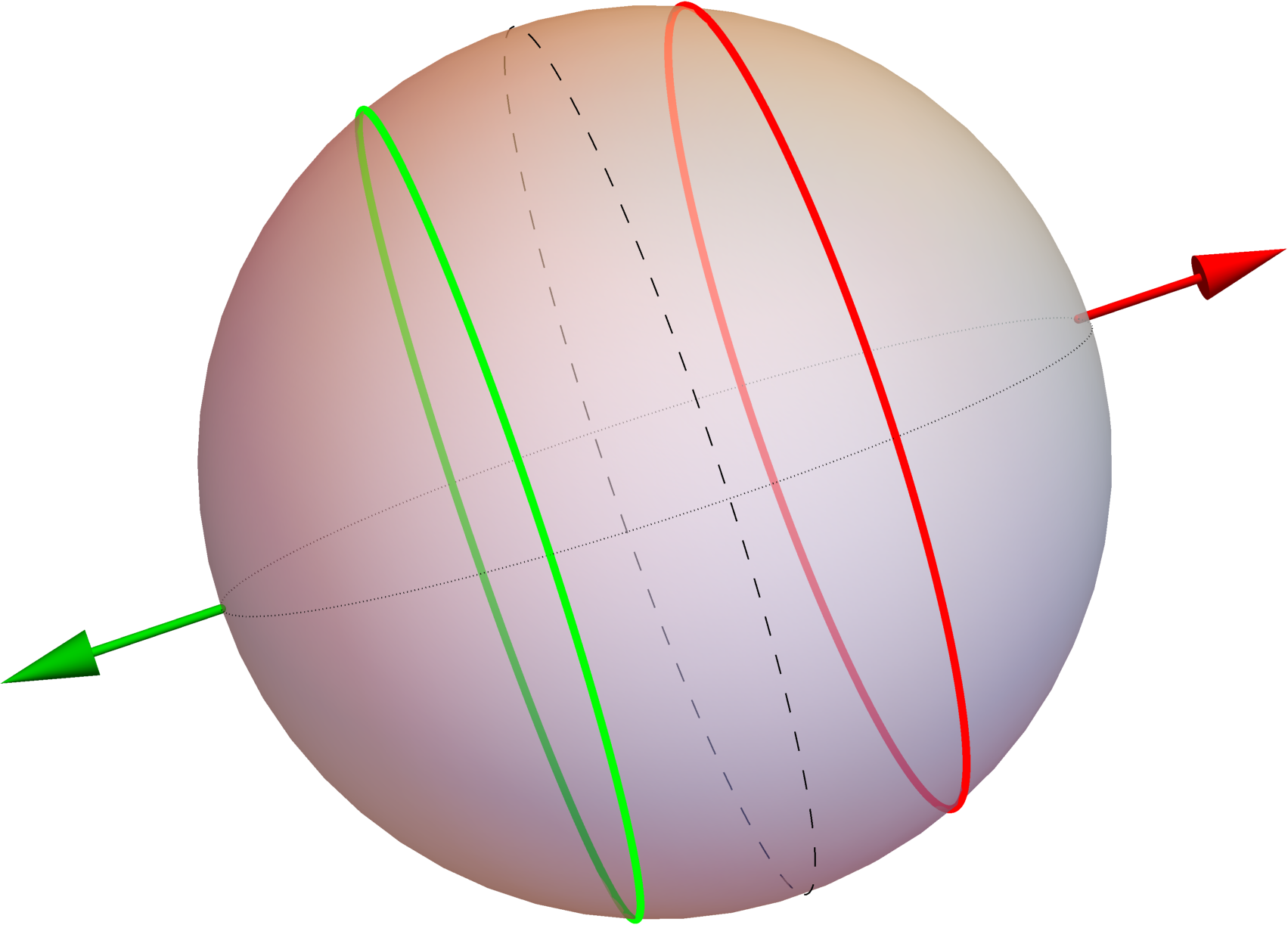}}\hfill
  \subfloat[\label{fig:ss2}]{%
    \includegraphics[width=.27\textwidth]{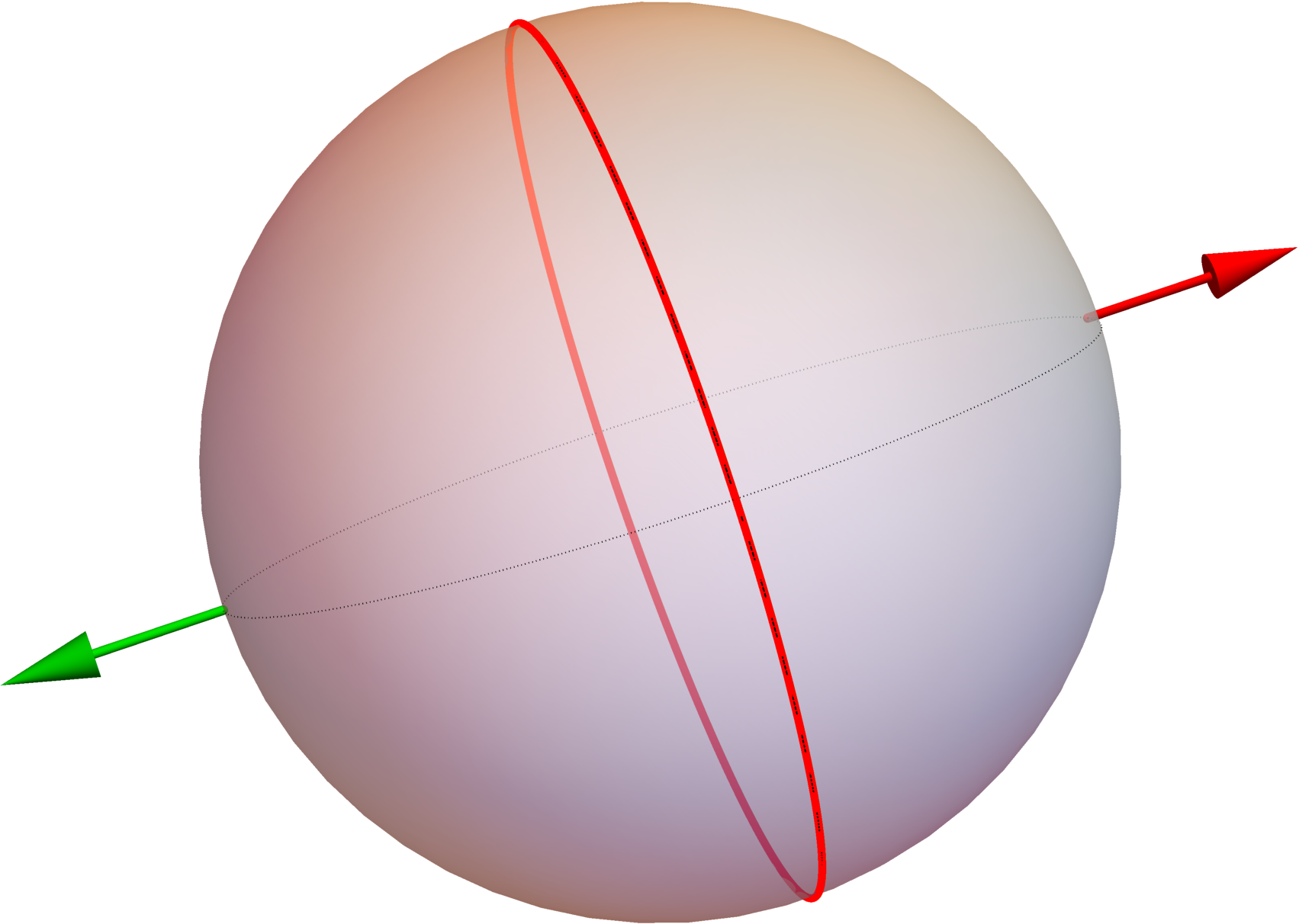}}\hfill
   \subfloat[\label{fig:ss1}]{%
    \includegraphics[width=.27\textwidth]{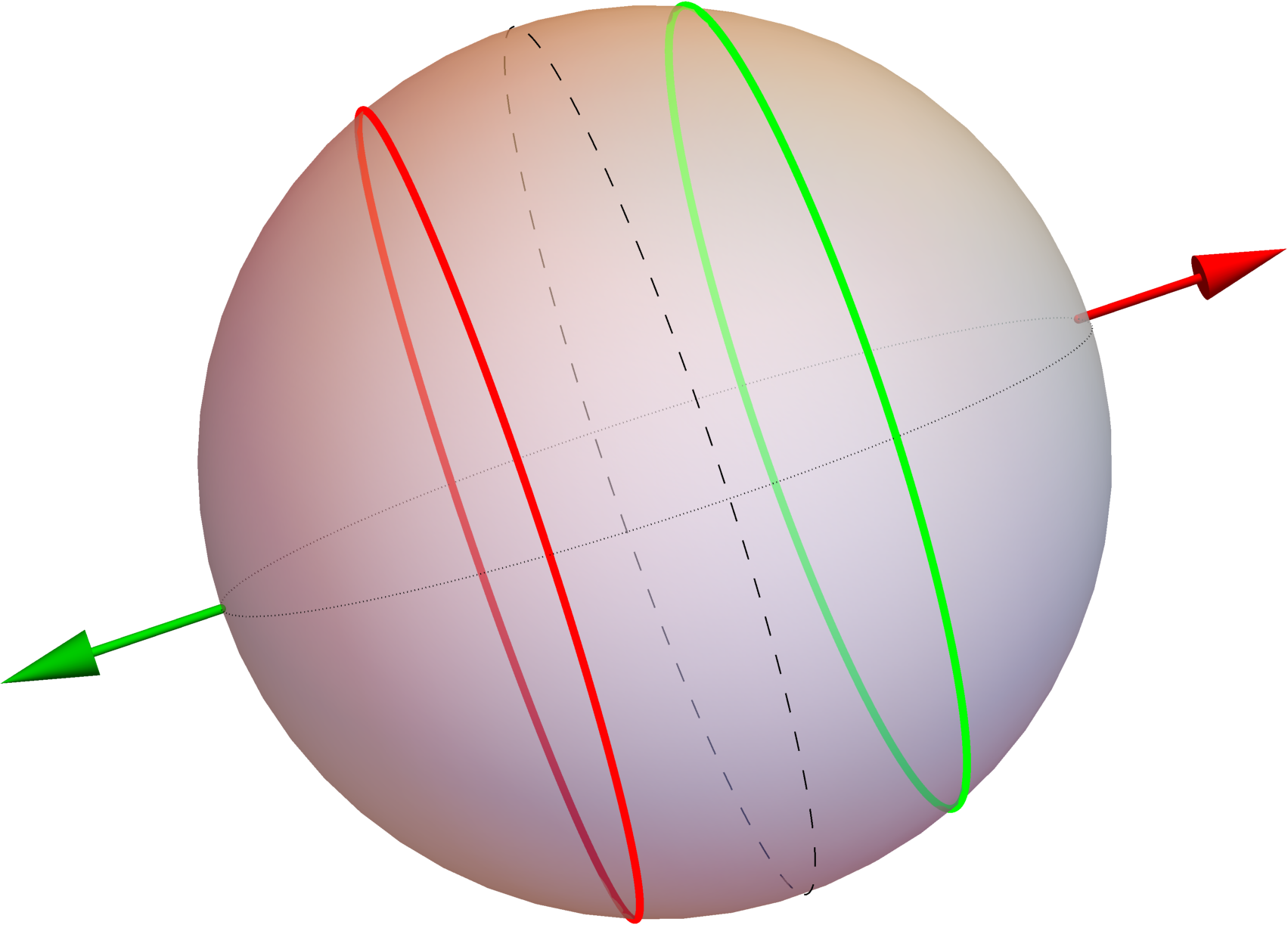}}\hfill
  \subfloat{%
    \includegraphics[width=.1\textwidth]{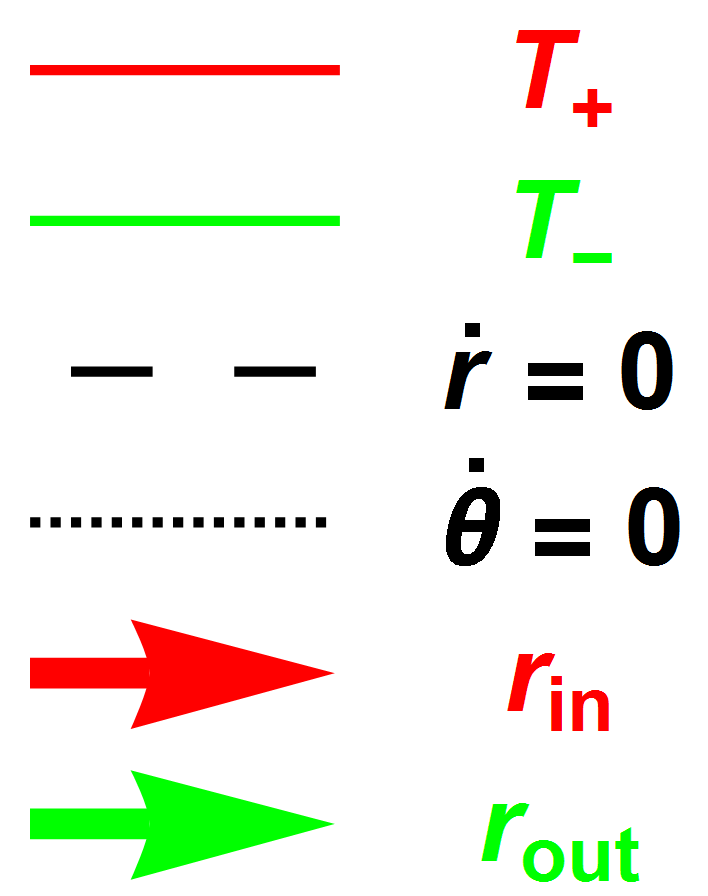}}\hfill
\caption{The trapped set on the celestial sphere of a standard observer at different radial location in a Schwarzschild DOC. Observer \protect\subref{fig:ss3} is located outside the photon sphere at $r=3.9M$, observer \protect\subref{fig:ss2} is located on the photon sphere at $r=3M$ and finally observer\protect\subref{fig:ss1} is located between the horizon and the photonsphere at $r=2.5M$. One can see that the future trapped set moves from the ingoing hemisphere in \protect\subref{fig:ss3} to the outgoing hemisphere in \protect\subref{fig:ss1} as one varies the location of the observer. The future and past trapped set coincide on the $\dot r =0$ line when the observer is located on the photon sphere at $r=3M$ in \protect\subref{fig:ss2} }
\label{fig:ssshadows}
\end{figure}
To conclude we see that $\mathbb{T}_+(p)$ and $\mathbb{T}_-(p)$ are circles on the celestial sphere independent of the value of $r(p)$. In \cite[p.14]{penrose_spinors_1987} it is shown that Lorentz transformations on the observer correspond to conformal transformations on the celestial sphere. They are equivalent up to M\"obius transformations on the complex plane when considering the $S_2$ to be the Riemann sphere. Hence circles on the celestial sphere stay circles under Lorentz transformations. As a consequence if $r(p_a)\neq r(p_b)$ then there exists a Lorentz transformation $(\mathbf{LT})$ such that  $\mathbb{T}_-(p_a)= \mathbf{LT}[\mathbb{T}_-(p_b)]$. This concept is sufficiently important that it deserves a proper definition.
\begin{definition}\label{def:shadowdeg}
The shadows at two points $p_1$, $p_2$ are called degenerate if, upon identification of the two celestial spheres by the orthonormal basis, there exists an element of the conformal group on $S^2$ that transforms $\mathbb{T}_-(p_1)$ into $\mathbb{T}_-(p_2)$.
\end{definition}

\begin{remark}
The shadow at two points $p_1$, $p_2$ being degenerate implies that for every observer at $p_1$ there exists an observer at $p_2$ for which the shadow on $S^2$ is identical. Because this notion compares structures on $S^2$, the two points do not have to be in the same manifold for their shadows to be degenerate. Just from the shadow alone an observer can not distinguish between these two configurations.
\end{remark}
The only reliably information an observer can obtain is thus, that  $\mathbb{T}_-(p)$ is a~proper circle on his celestial sphere.

\subsection{The trapped sets in Kerr}\label{sec:kerrshadow}
We will now discuss the properties of the sets  $\mathbb{T}_\pm(p)$ in Kerr. Note that the equations of motion for $r$ \eqref{eq:radial} and $\theta$ \eqref{eq:theta} have two solutions that differ only by a sign for a fixed combination of $E, L_z, K$. Therefore we know that the trapped sets will have a reflection symmetry across the $k_1=0$ and the $k_2=0$ planes. A sign change in $k_2$ maps both sets $\mathbb{T}_\pm(p)$ to themselves, while a sign flip in $k_1$ maps $\mathbb{T}_+(p)$ to $\mathbb{T}_-(p)$ and vice versa.\\
In the following we will use the parametrization of \cite{grenzebach_photon_2014}. We introduce the coordinates $\rho \in[0,\pi]$ and $\psi \in[0,2\pi)$ on the celestial sphere. Thus \eqref{eq:tangentplus} can be written as:
\begin{equation}\label{eq:tangentsph}
\dot\gamma( \rho,\psi)|_p =\alpha ( e_0+ \cos(\rho) e_1+ \sin(\rho) \cos(\psi) e_2+ \sin(\rho)\sin(\psi) e_3)
\end{equation}
The direction towards the black hole is given by $\rho=\pi$. Following \cite{grenzebach_photon_2014}  one finds the following parametrization of the celestial sphere in terms of constants of motion: 
\begin{subequations}\label{eq:comonsphere}
\begin{align}
\sin(\psi)&= \left.\frac{(\mathcal{L}-a)+a \cos^2 (\theta)}{\sqrt{\mathcal{K}}\sin(\theta)}\right|_p,\\
\sin(\rho)&= \left. \frac{\sqrt{\Delta \mathcal{K}}}{r^2 - a(\mathcal{L}-a)}\right|_p,
\end{align}
\end{subequations}
Analog to the functions \eqref{eq:E_trap} and \eqref{eq:Q_trap} which give the value of the conserved quantities in terms of the radius of trapping, we can give such relations for the conserved quotients $\mathcal{K}$ and $\mathcal{L}$ in this formulation they can be found for example in \cite{grenzebach_photon_2014}. To differ between the trapped radius and the observers radius we use $x\in[r_{min}(\theta),r_{max}(\theta)]$ to parametrize the conserved quantities of the trapped set. Here $r_{min}(\theta)$ and $r_{max}(\theta)$ are given as the intersection of a cone of constant $\theta$ with the boundary of the area of trapping. Note that it is part of the following proof to show that the given interval is in fact the correct domain for the parameter $x$. The parameter $x$ here corresponds to the radius of the trapped null geodesic which a particular future trapped direction is asymptoting to. We have then: 
\begin{subequations}\label{eq:altcomtrap}
\begin{align}
\mathcal{K}&= \frac{16 x^2 \Delta(x)}{(\Delta'(x))^2}\\
a( \mathcal{L}-a)&= \left( x^2-\frac{4x\Delta(x)}{\Delta'(x)}\right)
\end{align}
\end{subequations}
where $ \Delta'(x)=2 x - 2 M  $ is the partial derivative of $\Delta(x)$ with respect to $x$. Plugging \eqref{eq:altcomtrap} into \eqref{eq:comonsphere} we obtain:
\begin{subequations}\label{eq:parametrized}
\begin{align}
\sin(\psi)&= \frac{\Delta'(x)\{x^2 +a^2 \cos^2 (\theta(p))\}-4 x \Delta(x)}{4x\sqrt{\Delta(x)} a\sin(\theta(p))}\label{eq:psitox}\\
\sin(\rho)&= \frac{4x\sqrt{\Delta(r(p))\Delta(x)}}{\Delta'(x)(r(p)^2-x^2) + 4x \Delta(x)} := h(x)\label{eq:rhotox}
\end{align}
\end{subequations}
We are now ready to prove the following Theorem along the lines of \cite{paganini2018smoothness}.
\begin{thm}\label{thm:1}
The sets $\mathbb{T}_+(p)$ and $\mathbb{T}_-(p)$ are smooth curves on the celestial sphere of any timelike observer at any point in the exterior region of any subextremal Kerr spacetime. 
\end{thm}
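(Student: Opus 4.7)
The plan is to exhibit $\mathbb{T}_+(p)$ as the image of a smooth immersion $S^1 \to S^2$; the statement for $\mathbb{T}_-(p)$ will then follow from the reflection $k_1 \to -k_1$ noted just before the theorem. The natural parameter is the radius $x \in [r_{min}(\theta(p)), r_{max}(\theta(p))]$ of the spherical trapped orbit to which a future-trapped geodesic asymptotes, together with the sign $\epsilon = \mathrm{sgn}(\dot\theta|_p) \in \{+,-\}$. Two copies of this closed interval, glued at the endpoints $x = r_{min}$ and $x = r_{max}$ where $\dot\theta|_p = 0$, form a topological $S^1$; the task is then to check that its image in $S^2$, defined pointwise by \eqref{eq:parametrized}, is smooth and immersed.

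Smoothness in the interior follows almost immediately from \eqref{eq:parametrized}. Both $\sin\psi(x)$ and $\sin\rho(x)$ are rational in $x$ on $(r_{min}, r_{max})$, with denominators that do not vanish: $\Delta(x) > 0$ for $x > r_+$, and $a\sin\theta(p) > 0$ away from the rotation axis (the axial case is addressed separately below). The choice of $\epsilon$ fixes the sign of $\cos\psi$, while $\cos\rho$ has the sign of $x - r(p)$ and passes smoothly through $0$ at $x = r(p)$, where $\sin\rho$ attains its maximum. Because \eqref{eq:altcomtrap} is phrased in $(\mathcal{K},\mathcal{L})$ rather than in $(\mathcal{E},\mathcal{Q})$, the point $x = r_3$ (where $L_z = 0$, so $\mathcal{E}$ and $\mathcal{Q}$ blow up) is not a genuine singularity: $\mathcal{K}(x)$ and $\mathcal{L}(x)$ are both smooth at $r_3$, so \eqref{eq:parametrized} extends smoothly through it. Nonvanishing of the tangent vector on each open branch follows from the fact that $\sin\rho = h(x)$ in \eqref{eq:rhotox} is a nonconstant rational function of $x$, hence has isolated critical points only.

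The core difficulty lies at the endpoints $x_0 \in \{r_{min}, r_{max}\}$, where $\sin\psi \to \pm 1$ and the $(\psi,\rho)$ chart itself degenerates. Geometrically, $x_0$ is precisely the value at which the spherical orbit has a $\theta$-turning point coinciding with $\theta(p)$, so $\dot\theta|_p = 0$ and $k_2 = \sin\rho\cos\psi = 0$ there. I would switch to the local smooth coordinate $u := \cos\psi$ on $S^2$ near the endpoint, which is valid since $\sin\rho(x_0) > 0$. Writing $f(x) := \sin\psi(x)$ from \eqref{eq:psitox}, the implicit function theorem provides a smooth local inverse $x = F(\sin\psi)$ near $\sin\psi = \pm 1$, \emph{provided} $f'(x_0) \neq 0$. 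Composing with $\sin\psi = \pm\sqrt{1 - u^2}$, which is smooth in $u$ near $u = 0$ because only even powers of $u$ appear, expresses $x$ as a smooth function of $u$, and hence $\sin\rho = h(x(u))$ as well. The two branches $\epsilon = \pm$, corresponding to $u > 0$ and $u < 0$, then glue across $u = 0$ into a single smooth arc; repeating at both endpoints closes the two branches into a smooth embedded $S^1 \subset S^2$.

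The hard step is thus the transversality condition $f'(x_0) \neq 0$ at each endpoint, which is equivalent to the map $x \mapsto \cos^2\theta_{turn}(x)$ of \eqref{eq:theta_turn} having nonzero derivative whenever $\cos^2\theta(p)$ is attained. This is a direct though slightly tedious computation from \eqref{eq:theta_turn} that I do not grind through here. Two genuinely degenerate configurations warrant separate treatment: the axial case $\sin\theta(p) = 0$, in which $r_{min} = r_{max} = r_3$ collapses to a point but axisymmetry forces $\mathbb{T}_\pm(p)$ to be round latitudes on $S^2$; and the equatorial case $\theta(p) = \pi/2$, where $\{r_{min}, r_{max}\} = \{r_1, r_2\}$ and the endpoint orbits are the equatorial circular geodesics, handled again by a short direct check.
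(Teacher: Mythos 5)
Your construction is, at bottom, the same as the paper's: both rest on the parametrization \eqref{eq:parametrized}, on the invertibility of $x\mapsto\sin\psi$, and on a smooth branch switch for $\rho$ where $\sin\rho$ reaches $1$. The difference is only organizational --- you parametrize by $x$ and a sign $\epsilon$ and repair the chart at the endpoints, whereas the paper parametrizes by $\psi\in[0,2\pi)$, which absorbs the endpoint degeneracy automatically because $x$ is a smooth function of $\sin\psi$ and $\sin\psi$ has nondegenerate critical points at $\psi=\pi/2,\,3\pi/2$. The problem is that you have deferred exactly the two steps that constitute the actual proof. First, the transversality $f'(x_0)\neq 0$ that you yourself label the hard step and do not carry out is needed not only at the endpoints: the paper differentiates the right-hand side of \eqref{eq:psitox} directly and obtains
\begin{equation*}
\frac{\{x^2+a^2\cos^2\theta(p)\}\bigl((M-x)^3-M(M^2-a^2)\bigr)}{2x^2\,\Delta(x)^{3/2}\,a\sin\theta(p)},
\end{equation*}
which is strictly negative on all of $(r_+,\infty)$. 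This global monotonicity is also what identifies $[x_{min},x_{max}]$ as the correct domain and what actually guarantees the immersion property in the interior; your appeal to ``isolated critical points'' of $h$ does not do that job, since at such a point one needs the $\psi$-component of the velocity to be nonzero. Differentiating \eqref{eq:psitox} is also cleaner than going through $\cos^2\theta_{turn}$ in \eqref{eq:theta_turn}.

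Second, the clause ``$\cos\rho$ has the sign of $x-r(p)$ and passes smoothly through $0$ at $x=r(p)$'' is the other nontrivial claim and is asserted rather than proved. It requires knowing that $h(r(p))=1$ exactly, that $h<1$ elsewhere on the branch (so the denominator of \eqref{eq:rhotox} never vanishes), and that the maximum is nondegenerate, so that $\mathrm{sgn}(x-r(p))\sqrt{1-h(x)^2}$ is smooth; the paper establishes the sign of $h'$ on either side of $r(p)$, checks the denominator, and devotes Appendix \ref{app:A} to the gluing lemma giving $\frac{d}{dx}\arcsin(f(x))\to\sqrt{-f''(0)}$ at such a point. Finally, you single out the axial and equatorial observers as special, but the genuinely borderline configuration is $r(p)=x_{min}(\theta(p))$ or $r(p)=x_{max}(\theta(p))$, i.e.\ an observer on the boundary of $\mathcal{A}$: there your two degeneracies --- the chart failure at $\cos\psi=0$ and the sign change of $\cos\rho$ --- occur at the same point, and a separate argument (the paper's Case 3, showing the curve touches $\rho=\pi/2$ tangentially because the second derivative of $h(x(\psi))$ vanishes there) is required. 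As written, your argument silently assumes these two degeneracies never coincide.
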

\begin{proof}
 We start by anayzing the right hand side of \eqref{eq:psitox}:
\begin{equation}
    \begin{split}
\frac{d}{dx} &\left(\frac{\Delta'(x)\{x^2 +a^2 \cos^2 (\theta(p))\}-4 x \Delta(x)}{4x\sqrt{\Delta(x)}a\sin(\theta(p))}\right) \\
&\hphantom{\left(\frac{\Delta'(x)\{x^2 \}}{x}\right.}=\frac{\{x^2 + a^2 \cos^2(\theta(p))\}((M-x)^3-M(M^2-a^2))}{2x^2 \Delta(x)^{3/2} a \sin(\theta(p))},
 \end{split}
\end{equation}
which is strictly negative for $x\in(r_+,\infty)$. The limit of the right hand side of \eqref{eq:psitox} is given by $\infty$ for $x\rightarrow r_+$ and $-\infty$ for $x\rightarrow\infty$. Therefore the right hand side is invertible and $x(\sin(\psi))$ is a smooth function of $\psi$ with extrema at the extremal points of $\sin(\psi)$. As was shown in \cite{grenzebach_photon_2014} the minimum $x_{min}(\theta(p))$ at $\psi=\pi/2$ and the maximum of $x_{max}(\theta(p))$ at $\psi=3\pi/2$ correspond exactly to the intersections of a cone with constant $\theta$ with the boundary of the region of trapping. This can be seen by setting the left hand side of \eqref{eq:psitox} equal to $\pm1$, taking the square of the equation, solving for $\cos^2(\theta)$ and comparing to \eqref{eq:theta_turn}. Important here is that $[x_{min}(\theta(p)),x_{max}(\theta(p))]\subset [r_1,r_2]$ for all values of $\theta(p)$.
Now we take a look at the right hand side of equation \eqref{eq:rhotox}:
\begin{equation}
   \frac{d}{dx} \left(h(x)\right) =\frac{8(r(p)^2-x^2)\Delta(r(p))((x-M)^3+ M(M^2-a^2))}{\sqrt{\Delta(r(p))\Delta(x)}(4x \Delta(x)+ (r(p)^2-x^2)\Delta'(x))^2}.
\end{equation}
This is positive when $x<r(p)$ and negative when $x>r(p)$. The denominator never vanishes for $x\in(r_+,\infty)$ because: 
\begin{equation}
(4x\Delta(x)+ (r(p)^2-x^2)\Delta'(x))|_{\{r(p)=r_+, x=r_+\}}=0
\end{equation}
and 
\begin{align}
\label{eq:dx}
\frac{d}{dx}(4x\Delta(x)+ (r(p)^2-x^2)\Delta'(x))&=2(3x^2-6Mx+2a^2+r(p)^2)>0\\
\frac{d}{dr(p)}(4x\Delta(x)+ (r(p)^2-x^2)\Delta'(x))&=2r(p)\Delta'(x)>0,
\end{align}
where we used $r(p)>r_+>M>a$ in \eqref{eq:dx}. \\
If we set $x=r(p)$ in \eqref{eq:rhotox} then the right hand side is equal to $1$. Furthermore in any of the limits $r(p)\rightarrow r_+$, $r(p)\rightarrow \infty$, $x\rightarrow r_+$, and $x\rightarrow\infty$ it goes to zero. \begin{case}If $p\notin \mathcal{A}$ hence if $r(p)\notin[x_{min}(\theta(p)),x_{max}(\theta(p))]$ then the two functions:
\begin{align}
\rho_1(\psi)=\arcsin(h(x(\psi)))&: [0,2\pi)\rightarrow [\rho_{1_{min}},\rho_{1_{max}}]\subset \left(0,\frac{\pi}{2}\right)\\
\rho_2(\psi)=\pi-\arcsin(h(x(\psi)))&: [0,2\pi)\rightarrow[\rho_{2_{min}},\rho_{2_{max}}]\subset\left(\frac{\pi}{2},\pi\right)
\end{align}
are both smooth with $\rho_1(0)=\rho_1(2\pi)$ and $\rho_2(0)=\rho_2(2\pi)$. If $p$ is between the region of trapping and the asymptotically flat end, the function $\rho_2(\psi)$ corresponds to $\mathbb{T}_+(p)$ and $\rho_1(\psi)$ corresponds to $\mathbb{T}_-(p)$. Because $(\pi/2,\pi]$ corresponds to the geodesic with $\dot r< 0$. If $p$ is between the region of trapping and the horizon then the role of $\rho_1(\psi)$ and $\rho_2(\psi)$ are switched.
\end{case}
\begin{case}If $p\in \mathcal{A}$ we need to do some extra work. For simplicity we only consider the interval $\psi\in[\pi/2,3\pi/2]$ as the rest follows by symmetry of $\sin(\psi)$ in $[0,\pi]$ across $\pi/2$ and in $[\pi,2\pi]$ across $3\pi/2$. Let $\arcsin(x)$ map into this interval, then we define:
\begin{equation}
    \psi_0(r(p))= \arcsin\left(\frac{\Delta'(r(p))\{r(p)^2 +a^2 \cos^2 (\theta(p))\}-4 r(p) \Delta(r(p))}{4r(p)\sqrt{\Delta(r(p))}a\sin(\theta(p))}\right).
\end{equation}
The two functions:
\begin{align}
\rho_3(\psi)&=\begin{cases}
\arcsin(h(x(\psi))) &\text{ if } \psi\in[\pi/2,\psi_0(r(p))]\\
\pi-\arcsin(h(x(\psi)))&\text{ if } \psi\in (\psi_0(r(p)),3\pi/2]
\end{cases}\\
\rho_4(\psi)&=\begin{cases}
\pi-\arcsin(h(x(\psi))) &\text{ if } \psi\in[\pi/2,\psi_0(r(p))]\\
\arcsin(h(x(\psi)))&\text{ if } \psi\in (\psi_0(r(p)),3\pi/2]
\end{cases}
\end{align}
are then smooth on $[\pi/2,3\pi/2]$. For a proof see Appendix \ref{app:A} and note that at $\psi_0$, $h(x(\psi))$ satisfies the conditions required in the appendix. Since $p\in \mathcal{A}$ we have that $x_{min}(\theta(p))<r(p)<x_{max}(\theta(p))$. Therefore the geodesic on the celestial sphere parametrized by $x_{max}(\theta(p))$ has to have $\dot r>0$ and thus has to be in $[0,\pi/2)$.  On the other hand the geodesic on the celestial sphere parametrized by $x_{min}(\theta(p))$ has to have $\dot r<0$ and thus has to be in $(\pi/2,\pi]$. In fact by the monotonicity of the right hand side of \eqref{eq:psitox} and the fact that $x(\psi_0)=r(p)$ we know that for $\psi \in [\pi/2,\psi_0)$ we have $x(\psi)<r(p)$ and for $\psi \in (\psi_0,3\pi/2]$ we have $x(\psi)>r(p)$. Thus we can conclude that for $p\in \mathcal{A}$, $\rho_4$ corresponds to $\mathbb{T}_+(p) $ and $\rho_3$ corresponds to $ \mathbb{T}_-(p)$ and thus both sets are smooth. 
\end{case}
\begin{case}In the special case when $r(p)= x_{max}(\theta(p))$ or $r(p)= x_{min}(\theta(p))$ the functions $\rho_1$ and $\rho_2$ describing $\mathbb{T}_\pm(p)$ do reach $\rho=\pi/2$ at $\psi=3\pi/2$ or $\psi=\pi/2$ respectively. However since in these cases we have that:
\begin{equation}
    \frac{d^2}{d\psi^2}(h(x(\psi)))=0,
\end{equation}
the two sets meet at this point tangentially and do not cross over into the other hemisphere.
\end{case}
This concludes the proof. 
\end{proof}
\begin{remark}
In \cite{grenzebach_photon_2014} it was observed that $\rho_{max}$ of $\mathbb{T}_+(p) $ always corresponds to the trapped geodesic with $x_{min}(\theta(p))$ and $\rho_{min}$ of $\mathbb{T}_+(p) $ always corresponds to the trapped geodesic with $x_{max}(\theta(p))$ . When $p$ is outside the region of trapping $h(x)|_{x_{max}}$ is a~local maximum of $h(x(\psi))$ (as a function of $\psi$) and $h(x)|_{x_{min}}$ is a~local minimum of $h(x(\psi))$. When $p$ is between the region of trapping and the horizon $h(x)|_{x_{max}}$ is a~local minimum of $h(x(\psi))$ and  $h(x)|_{x_{min}}$ is a local maximum of $h(x(\psi))$. Since outside $\mathbb{T}_+(p) $ is always described by $\rho_2(\psi)$ and inside by $\rho_1(\psi)$, $\rho_{min}$ then always corresponds to $x_{min}$ and $\rho_{max}$ always corresponds to $x_{max}$. This also holds for $p\in\mathcal{A}$. \\
This observation means that the null geodesic approaching the innermost photon orbit has the smallest impact parameter (deviation from the radially ingoing null geodesic) and the null geodesic approaching the outermost photon orbit has the largest impact parameter.\\
For $\mathbb{T}_-(p) $ the correspondence is switched.
\end{remark}
\begin{remark}
The parametrization for $\sin(\psi)$ breaks down on the rotation axis, the one for $\sin(\rho)$ however remains valid with only one possible value for $x=r_3$. Due to the symmetry at these points we know that $\mathbb{T}_\pm(p)$ are described by proper circles on the celestial sphere. The PND is aligned with the axis of the rotation symmetry and hence the sets are symmetric under rotation along $\psi$. The situation is therefore equivalent to Schwarzschild and an observer can not distinguish whether they observes a Schwarzschild black hole or a Kerr black hole in the direction of the rotation axis. 
\end{remark}
\begin{remark}
We have only proved Theorem \ref{thm:1} for one standard observer at any particular point. However since any other observer at this point is related to the standard observer by a Lorentz transformation and the Lorentz transformations act as conformal transformations on the celestial sphere \cite[p.14]{penrose_spinors_1987}, the Theorem indeed holds for any observer. In \cite{grenzebach_aberrational_2015} the quantitative effect on the shape of the shadow of boosts in different directions are discussed.
\end{remark}
\begin{remark}
The parametrization for the trapped set on the celestial sphere of any standard observer in \cite{grenzebach_photon_2014,grenzebach_photon_2015} was derived for a much more general class of spacetimes. Therefore  Theorem \ref{thm:1} might actually hold for these cases as well. However, this is beyond the scope of this lecture note.
\end{remark}
From Theorem \ref{thm:1} we immediately get the following Corollary:
\begin{cor}
For any observer at any point $p$ in the exterior region of a subextremal Kerr spacetime we have that for any $k\in \mathbb{T}_+(p)$ and any $\epsilon>0$
\begin{itemize}[noitemsep]
\item $B_\epsilon(k)\cap \Omega_{\mathcal{H}^+}(p)\neq \emptyset$
\item $B_\epsilon(k)\cap \Omega_{\scri^+}(p)\neq \emptyset$.
\end{itemize}
\end{cor}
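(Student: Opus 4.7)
The plan is to combine Theorem \ref{thm:1} with the Jordan-Schoenflies theorem on $S^2$. By Theorem \ref{thm:1}, $\mathbb{T}_+(p)$ is a smooth closed curve in $S^2$; the parametrizations constructed in its proof in fact give a simple closed curve (the tangential touchings in Case~3 are not self-intersections). Hence $S^2\setminus\mathbb{T}_+(p)$ splits into exactly two open connected components $U_1,U_2$, each homeomorphic to an open disk, with common boundary $\partial U_1=\partial U_2=\mathbb{T}_+(p)$. This already yields the topological content we need: every point of $\mathbb{T}_+(p)$ lies in the closure of both $U_1$ and $U_2$, so every $\epsilon$-ball around such a point meets each of them.

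What remains is to identify $U_1,U_2$ with $\Omega_{\mathcal{H}^+}(p)$ and $\Omega_{\scri^+}(p)$. By Definition \ref{def:futinf} these three sets partition $S^2$, so $\Omega_{\mathcal{H}^+}(p)\sqcup\Omega_{\scri^+}(p)=U_1\sqcup U_2$. I would next verify that both are open in $S^2$. For $\Omega_{\mathcal{H}^+}(p)$ this follows from continuous dependence of the geodesic flow on initial conditions together with the fact that $\mathcal{H}^+$ is crossed transversally by any future directed non-tangent null geodesic. For $\Omega_{\scri^+}(p)$ the cleanest route is through the pseudo-potential classification of Section \ref{sec:radialeq}: escape to $\scri^+$ can be characterized by finitely many open inequalities on the conserved quotients, e.g.\ $\mathcal{E}(k)>V_+(r,\mathcal{Q}(k))$ for all $r\geq r(p)$ together with $\dot r(k)>0$ at $p$ for the co-rotating branch, and the obvious analogues for the counter-rotating branch and the $Q<0$ branch of Lemma \ref{lem:qneg}. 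Since $k\mapsto(\mathcal{E}(k),\mathcal{Q}(k),\dot r(k))$ is smooth, this is an open condition in $k$. Two disjoint open sets partitioning the two-component open set $U_1\cup U_2$ must, by connectedness of each $U_i$, each be a union of some of the $U_i$. To rule out the degenerate case where one exhausts the whole complement I would exhibit explicit witnesses: the outgoing principal null direction of Section \ref{sec:radialgeod}, $k=(1,0,0)$, has $\dot r>0$ with no radial turning point and lies in $\Omega_{\scri^+}(p)$, while its ingoing counterpart $k=(-1,0,0)$ crosses $\mathcal{H}^+$ and so lies in $\Omega_{\mathcal{H}^+}(p)$. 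Hence $\{\Omega_{\mathcal{H}^+}(p),\Omega_{\scri^+}(p)\}=\{U_1,U_2\}$, and the corollary follows from $\partial U_i=\mathbb{T}_+(p)$.

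The delicate step I expect to have to argue most carefully is the openness of $\Omega_{\scri^+}(p)$: reaching $\scri^+$ is a global condition along a geodesic of infinite affine length, so continuity of the geodesic flow at finite time does not by itself suffice. The pseudo-potential analysis of Section \ref{sec:radialeq}, and especially the absence of bounded oscillatory radial motion established in Lemma \ref{lem:onemax}, is exactly what allows this global condition to be reduced to the finitely many open inequalities on $(\mathcal{E},\mathcal{Q},\mathrm{sgn}\,\dot r)$ used above.
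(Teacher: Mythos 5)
Your proposal is correct, but it supplies substantially more than the paper does: the paper offers no proof at all, asserting that the corollary follows ``immediately'' from Theorem \ref{thm:1}. Strictly speaking, smoothness of the curve $\mathbb{T}_+(p)$ only gives that every $\epsilon$-ball around $k\in\mathbb{T}_+(p)$ meets $\Omega_{\mathcal{H}^+}(p)\cup\Omega_{\scri^+}(p)$; to get \emph{both} bullet points one needs exactly the kind of two-sidedness argument you construct. Your route (Jordan curve theorem on $S^2$, openness of the two complementary sets, connectedness of the two residual components, and the principal null directions of Section \ref{sec:radialgeod} as witnesses that neither component is degenerate) is a clean and legitimate way to close that gap; the implicit argument the authors presumably have in mind is the local one, namely that perturbing a trapped direction moves $\mathcal{E}$ above or below the barrier height $\mathcal{E}_{trap}$ of the relevant pseudo potential, which by the single-extremum structure of Lemma \ref{lem:onemax} sends the geodesic over the barrier on one side and reflects it on the other, producing explicit infalling and escaping directions in every neighbourhood without any global topology. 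Both work; yours is more robust but leans on the openness of $\Omega_{\scri^+}(p)$, which you rightly flag as the delicate step. There your case list is incomplete as written: besides the directions with $\dot r>0$ and no outer turning point, $\Omega_{\scri^+}(p)$ also contains initially ingoing directions that reflect at an outer-branch turning point $r_b\le r(p)$ and then escape (including the boundary case $\dot r=0$ at $p$), so the open conditions must be phrased as ``$\mathcal{E}$ exceeds the barrier and $\dot r>0$'' \emph{or} ``$\mathcal{E}$ is below the barrier and $r(p)$ lies in the outer allowed interval,'' with the analogous statements for $V_-$ and for $Q<0$. All of these are still open conditions on $(\mathcal{E},\mathcal{Q},\operatorname{sgn}\dot r)$ by the monotonicity and single-extremum results of Section \ref{sec:radialeq}, so the argument goes through once the cases are enumerated.
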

\noindent Thus if we interpret the celestial sphere as the initial data space for null geodesics starting at $p$, the above Corollary is a coordinate independent formulation of the fact that trapping in the exterior region of subextremal Kerr black holes is unstable.\\
See Figure \ref{fig:kerrshadows} as an example of how the trapped sets change under a variation of the radial location of the observer.\\
\begin{figure}[t!]
\centering
   \subfloat[\label{fig:kerr1}]{%
    \includegraphics[width=.3\textwidth]{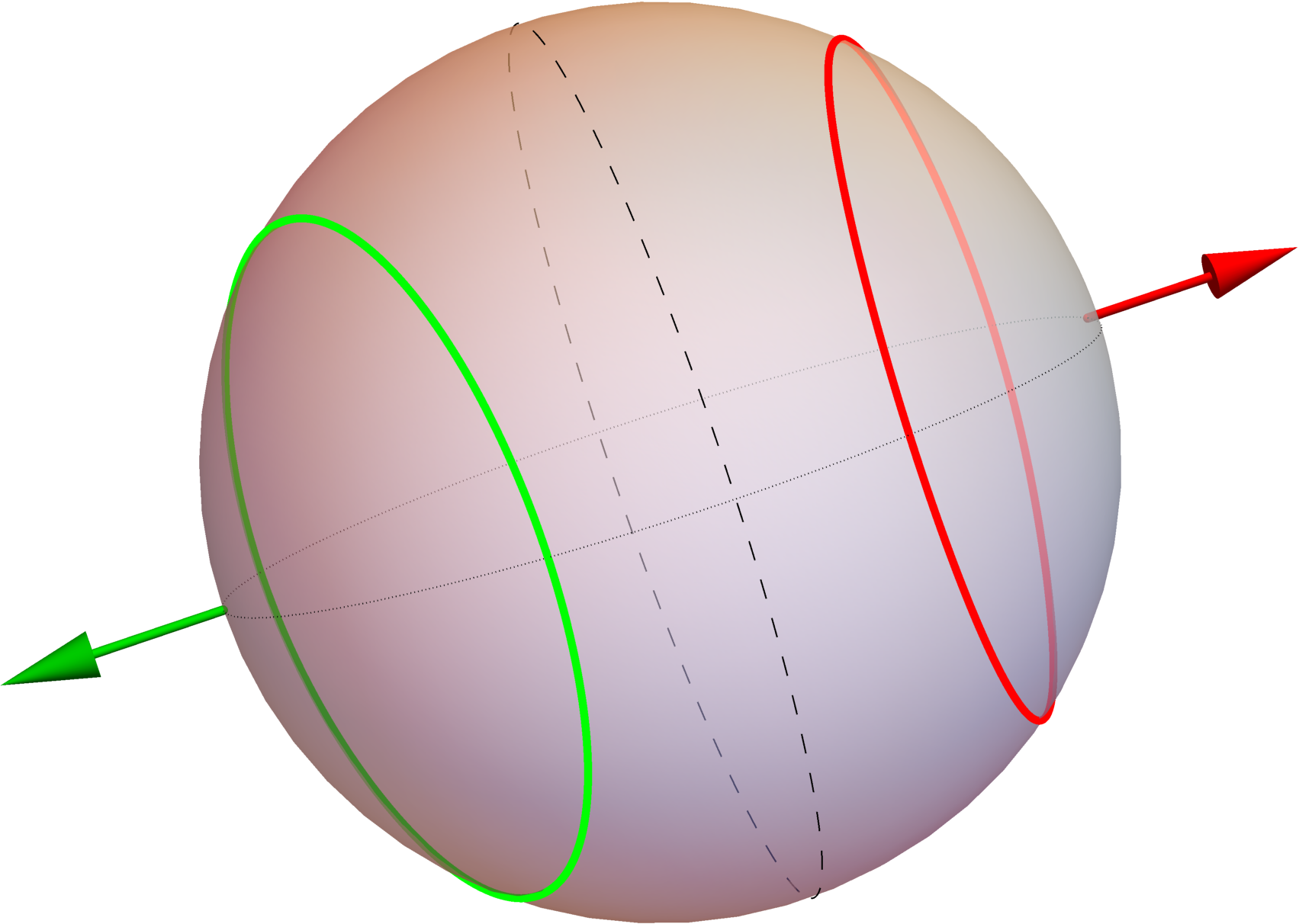}}\hfill
  \subfloat[\label{fig:kerr2}]{%
    \includegraphics[width=.3\textwidth]{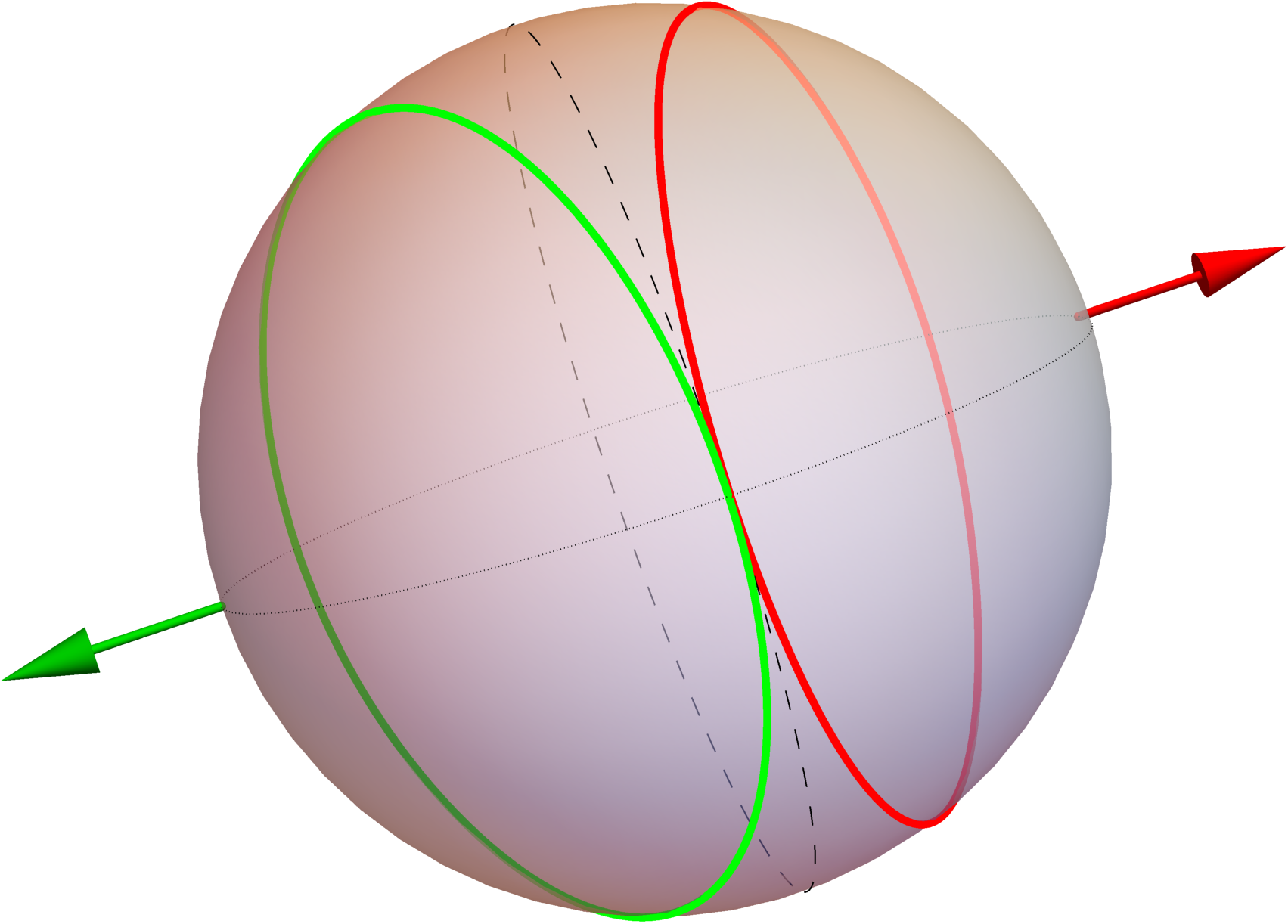}}\hfill
  \subfloat[\label{fig:kerr3}]{%
    \includegraphics[width=.3\textwidth]{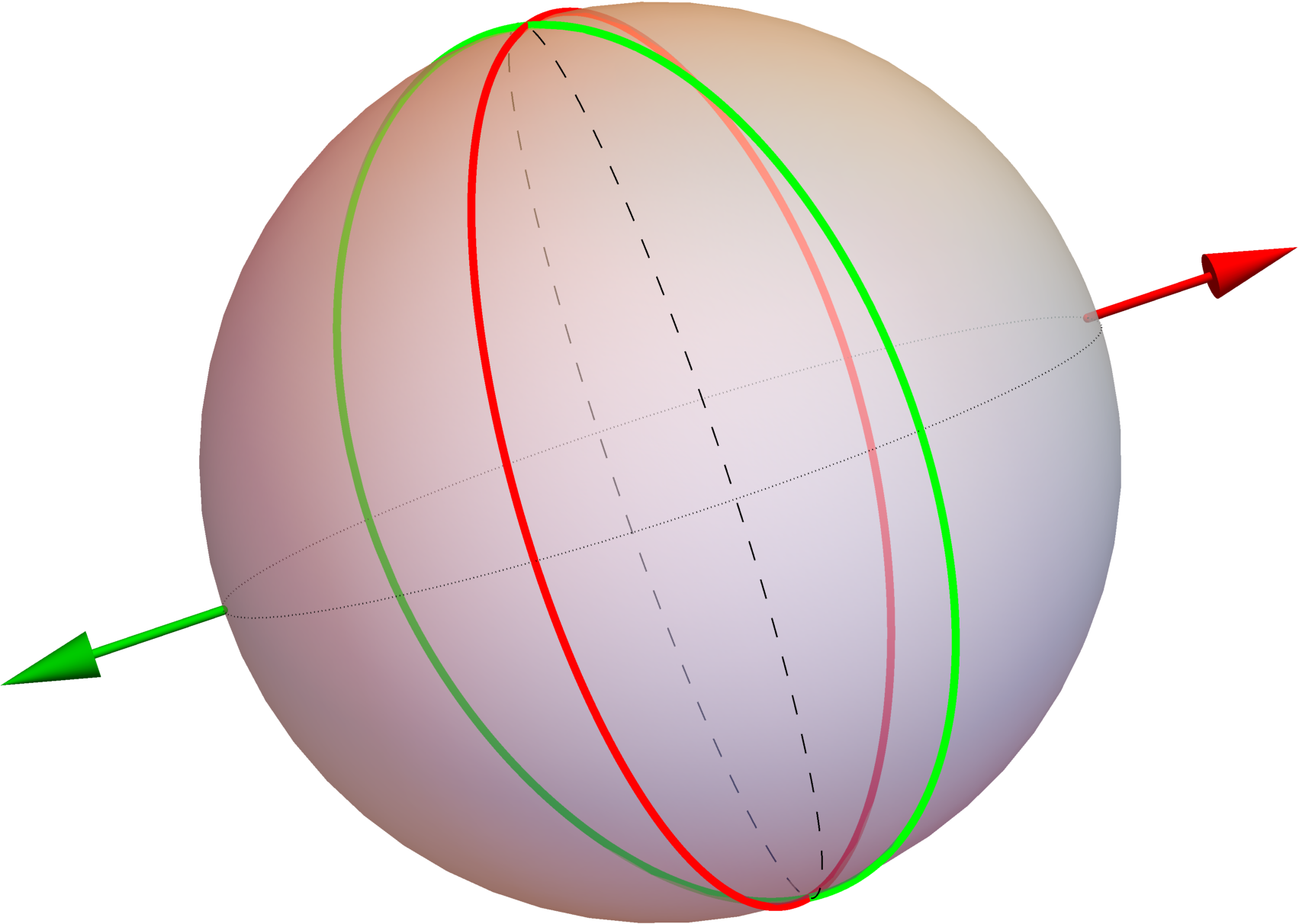}}\hfill\\
\hspace*{\fill} \subfloat[\label{fig:kerr4}]{%
    \includegraphics[width=.3\textwidth]{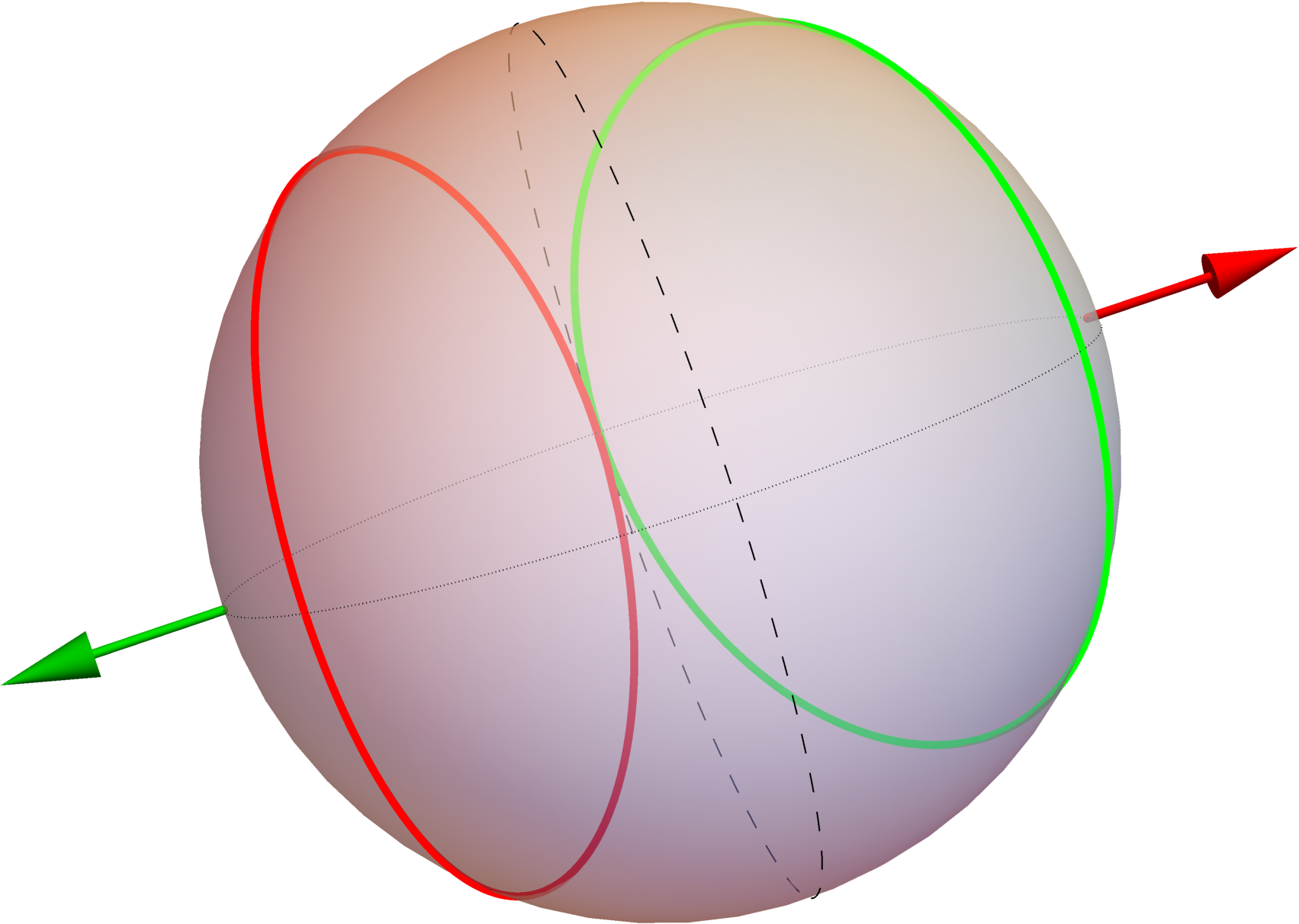}}\hfill
  \subfloat[\label{fig:kerr5}]{%
    \includegraphics[width=.3\textwidth]{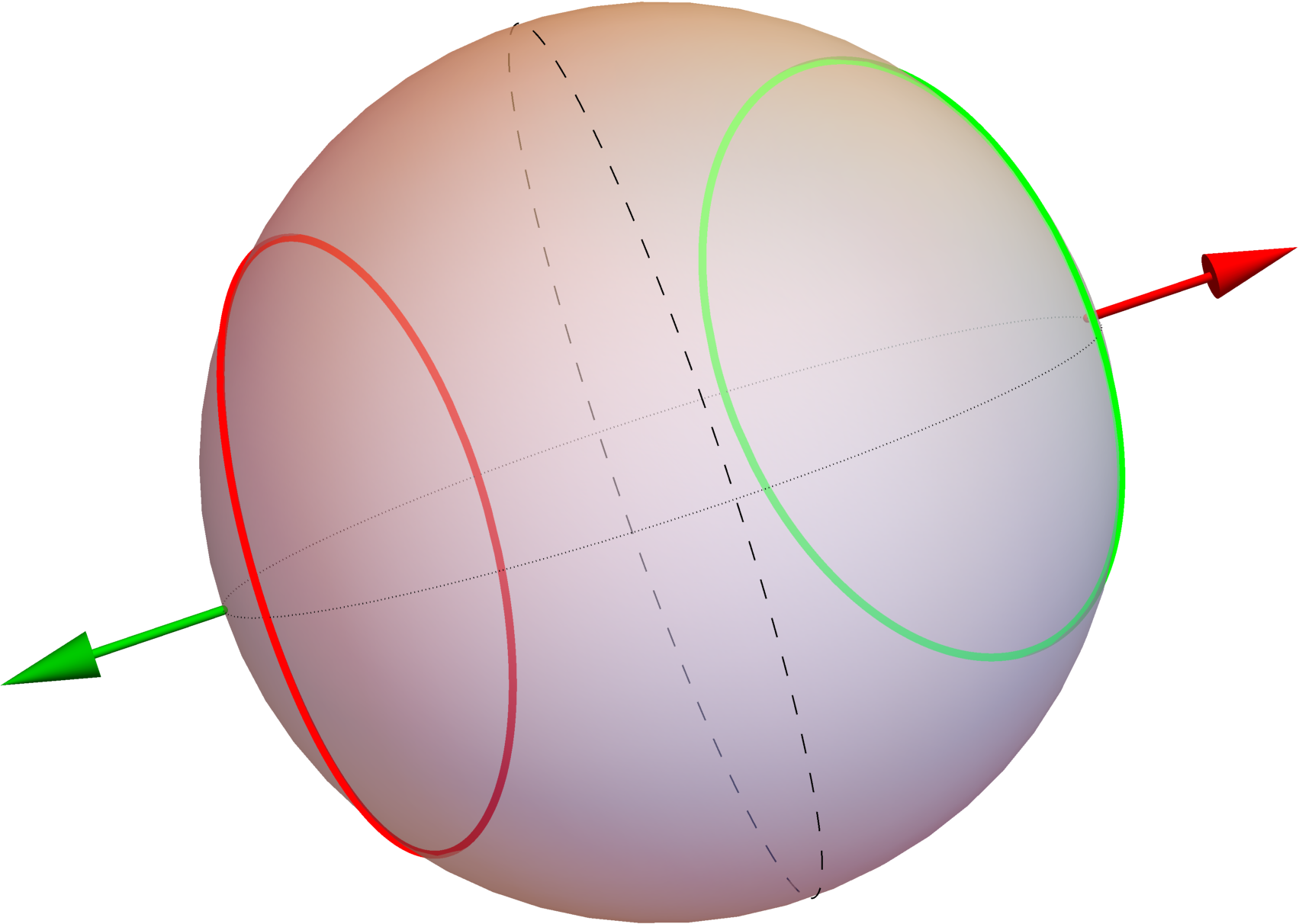}}\hfill
    \subfloat{%
    \includegraphics[width=.1\textwidth]{legend.png}}\hspace*{\fill}
  \caption{The trapped set on the celestial sphere of a standard observer at different radial locations in the equatorial plane of the exterior region of a Kerr black hole with $a=0.9$. Observer \protect\subref{fig:kerr1} is located outside the region of trapping at $r=5M$. Observer \protect\subref{fig:kerr2} is located on the outer boundary of the region of trapping at $r=r_2=3.535M$. Observer \protect\subref{fig:kerr3} is located inside the region of trapping at $r=r_3=2.56M$. Observer  \protect\subref{fig:kerr4} is located on the inner boundary of the area of trapping at $r=r_1=1.73M$ Finally observer \protect\subref{fig:kerr5} is located between the horizon and the area of trapping at $r=1.59M$. Again one can observe how the two trapped sets move in opposite directions on the celestial sphere as the observer approaches the black hole. In \protect\subref{fig:kerr1} the future trapped set is on the ingoing hemisphere and the past trapped set is on the outgoing hemispere. In \protect\subref{fig:kerr2} they meet in one point tangentially but are still entirely in one hemisphere except for that one point. In \protect\subref{fig:kerr1} the trapped sets intersect in two points and both have parts in both hemispheres. In \protect\subref{fig:kerr4} they only meet at one point tangentially again (now on the "other" side of the celestial sphere) and finally in \protect\subref{fig:kerr5} the future trapped set is entirely in the outgoing hemisphere and the past trapped set is entirely in the ingoing hemisphere.}
\label{fig:kerrshadows}
\end{figure}
Even though the qualitative features of $\mathbb{T}_\pm(p)$ do not change under a change of parameters, the quantitative features do. In \cite{hioki_measurement_2009,li_measuring_2014} it is discussed what information can be read of from the shadow at infinity. For points inside the manifold as considered in this work, this question was resolved in \cite{mars2017fingerprints}. It has been shown that in principal an observer away from the rotation axis can obtain information about the angular momentum $a$ of the black hole, their radial distance $r$ and inclination above the equatorial plane $\theta$ as well as their state of motion as compared to a~standard observer given by $e_0$ of \ref{eq:tetrad}. In the following, as an illustration of that result, we will present numerical results that show that the radial degeneracy of the shadow is broken in Kerr spacetimes away from the symmetry axis for. 
We take the stereographic projection: \cite[p.10]{penrose_spinors_1987}
\begin{subequations}\label{eq:projection}
\begin{align}
\zeta&= \frac{x+ i y}{1-z}\\
x&= \sin(\rho) \sin(\psi) \qquad y= \sin(\rho) \cos(\psi) \qquad z= \cos(\rho)
\end{align}
\end{subequations} 
of $\mathbb{T}_-(p)$ in order to compare the exact shape of the shadow for different observers. Here the values of $\rho$ and $\psi$ are given by the parametrization in \eqref{eq:altcomtrap}. The above projection for the celestial sphere of a standard observer in the exterior region of a Kerr black hole  guarantees that projection of the shadow in the complex plane will have a reflection symmetry about the real axis. This is due to the symmetry of the shadow on the celestial sphere of a standard observer under a sign flip in the $k_2$ component. To compare conformal classes of shadows for observers at different points of different black hole spacetimes, we establish the notion of a canonical observer together with a canonical projection. This will allow us to compare the shape of the shadow for these observers directly. The canonical observer together with the canonical projection are defined in such a way that the points on $\mathbb{T}_-(p)$ with $\Psi =\pi/2,3\pi/2$ will correspond to the points $(1, 0)$ and $(-1, 0)$ in the complex plane. On the practical side we take the stereographic projection along the radially ingoing direction for the standard observer and then apply conformal transformations to the projection of the shadow (translation and rescaling). In the following $C_{S^2}$ denotes the union of all conformal transformations on $S^2$.
\begin{remark}
In order to prove that the conformal class of two curves $x$ and $y$ on $S^2$ are equal it is sufficient to find an $x_0\in C_{S^2}[x]$ and a $y_0\in C_{S^2}[y]$ such that $x_0=y_0$. By choosing a canonical observer, we choose a fix representative $c_0[p]\in C_{S^2}[\mathbb{T}_-(p)]$ of each conformal class and thereby eliminate the freedom of performing conformal transformations on the celestial sphere of an observer. Hence if the shape of the canonical representative of any two shadows $c_0[p_1(a_1,\theta_1,r_1)]$, $c_0[p_2(a_2,\theta_2,r_2)]$ coincide, then their conformal class is equal. On the other hand if $c_0[p_1(a_1,\theta_1,r_1)]\neq c_0[p_2(a_2,\theta_2,r_2)]$ then the conformal classes of the shadow at $p_1$ and that at $p_2$ are different.
\end{remark}
 The breaking of the radial degeneracy in Kerr spacetimes can be seen in Figure \ref{fig:radialdeg}, where we plotted two black hole shadows, for canonical observers located in a Kerr spacetime with $a = 0.99$, at $ \theta = \pi/2 $, and different $r$ values: $r_1 = 5, r_2 = 50$. These particular values were chosen because for this case the degeneracy breaking is visible by the naked eye from the plot in Figure \ref{fig:radialdeg}. In general the breaking of the radial degeneracy is hardly visible in the plot. The breaking of the radial degeneracy implies that in principle when observing the shadow of a black hole, as has been done by the Event Horizon Telescope, one would have to take the distance from said black hole into consideration when trying to extract the black holes parameters from the observation. However the influence of the radial degeneracy for far away observers is quite likely a lot smaller than the resolution that can be achieved from such observations. 
\begin{figure}[t!]
  \subfloat{%
    \includegraphics[width=.8\textwidth]{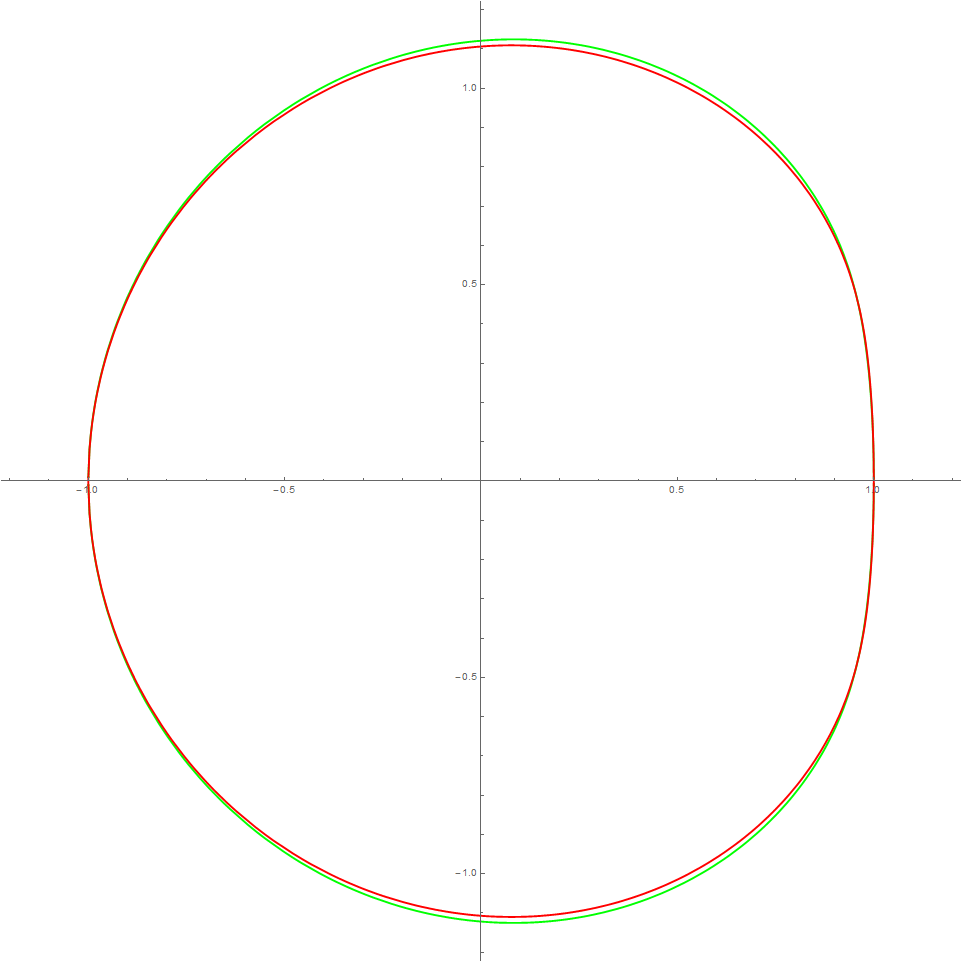}}\hfill
  \subfloat{%
    \includegraphics[width=.1\textwidth]{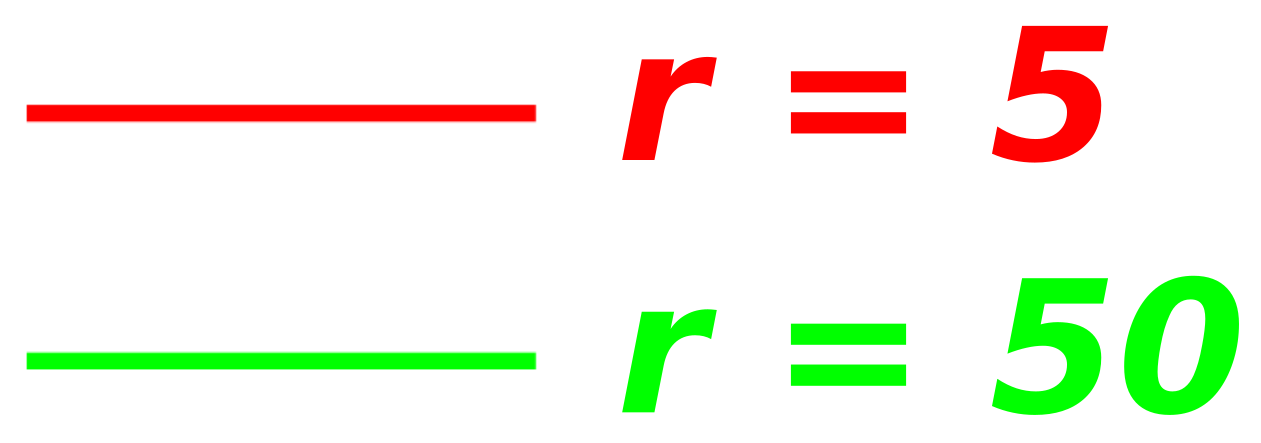}}\hfill
\caption{Breaking of the radial degeneracy for the shadow of a~Kerr black hole. Two canonical observers with $a = 0.99$, at~$\theta = \pi/2$, for two different values $r = 5$ and $r=50$.    }
\label{fig:radialdeg}
\end{figure}

\subsection{The Celestial Sphere as a Tool}\label{sec:cstool}
We will now take a look at some questions related to trapping beyond the Kerr family of spacetimes. 
In this section we follow closely \cite[p.72]{paganini2018role}. We present partial results towards resolving the question whether there can exist trapped null geodesics orthogonal to the Killing vector field $T$ in general smooth stationary black hole space times with positive surface gravity. The results in this section do not allow a conclusive answer, but they might turn out to be a small step on the way towards a full proof.

In the following we will make heavy use of the continuous dependence on initial data for the geodesic equation. We will use the following theorem.\footnote{Despite the fact that this is a rather basic statement, we were unable to find a source for this Theorem in the form we intend to use here. Nevertheless it is to be expected that this result is in fact well known.} 

\begin{thm}\label{thm:continuity}
Let $\mathcal M$ be a $C^1$ spacetime. Let $p$ and $q$ be two points on the same geodesic $\gamma$, with the tangent vector at $p$ pointing towards $q$. Let $p$ and $q$  be separated by a finite affine parameter. Let $\mathcal U (q) $ be an open neighbourhood of $q$ in $\mathcal{M}$. Then there exists an open neighbourhood $T\mathcal{U}(p, \dot \gamma|_p)$ of $(p, \dot \gamma|_p)$ in $T\mathcal M$ such that any geodesic in this neighbourhood intersects $\mathcal{U}(q)$.
\end{thm}
\begin{proof}
Let $\mathcal{U}_n$, $n\in[0,N]$ be a finite sequence of normal neighbourhoods that cover $\gamma$ between $p$ and $q$ with $p\in \mathcal{U}_0$ and $q\in \mathcal{U}_N$. Let $\Psi_n$ be the coordinate chart that belongs to the normal neighbourhood $\mathcal{U}_n$. \\
Now we pick a sequence of points $p_n$ such that $p_n\in \mathcal{U}_n\cap \mathcal{U}_{n-1}$.\\
For $p_N$ the theorem is true by the continuous dependence on initial data for ODEs in $\Reals^n$, for a proof see \cite[p.95]{hartman2002ordinary}.\\
For $p_{N-1}$ we have to do a little extra work. We have to show that there exists a neighbourhood $T\mathcal{U}(p_{N-1}, \dot \gamma|_{p_{N-1}})$ such that for any $\tilde\gamma(\lambda)$ that is a solution to the geodesic equation with initial data in $T\mathcal{U}(p_{N-1}, \dot \gamma|_{p_{N-1}}) $ there exists a parameter $\lambda_0$ such that $(\tilde\gamma(\lambda_0),\dot\tilde\gamma (\lambda_0))\in T\mathcal{U}(p_{N}, \dot \gamma|_{p_{N}})$. This is true by applying the continuous dependence on initial data for ODEs in  $\mathcal{U}_{N-1}$ and observing that for points in $\mathcal{U}_N\cap \mathcal{U}_{N-1}$ the determinant of the Jacobian of the map $\Psi_N \circ (\Psi_{N-1})^{-1}$ from $\Reals^n$ to $\Reals^n$ is bounded from above and below.\\
Now if it is true for $p_{N-1}$ it is clear that it is true for any $p_n$ and by that also for $p$ itself.
\end{proof}

For our further argument, we will only need the following corollary of Theorem \ref{thm:continuity}.

\begin{cor}\label{cor:geodcont}
Let $\mathcal M$ be a $C^1$ spacetime. Let $p$ and $q$ be two points on the same null geodesic with the tangent vector $\dot \gamma(k|_p)$ at $p$ pointing towards $q$. Let $p$ and $q$ be separated by a finite affine parameter. Let $e_0$ be a unit timelike vector and $S^2(e_0)$ the associated celestial sphere. Let $\mathcal U (q) $ be an open neighbourhood of $q$ in $\mathcal{M}$. Then there exists an open neighbourhood $B_\epsilon(k)$ of $k$ on $S^2(e_0)$ such that any null geodesic $\gamma(\tilde k|_p)$ with $\tilde k \in B_\epsilon(k)$ intersects $\mathcal{U}(q)$.
\end{cor}
Where $\gamma(\tilde k|_p)$ is given by definition \ref{def:gammak}.

\subsubsection{Existence of Trapping in General Black Hole Spacetimes}
In the following, we~will show that the existence of trapping is a generic feature of black hole spacetimes. 
Recall that $\HH^\pm=\overline{J^-(\scri^+)\cap J^+(\scri^-)} \backslash J^\mp(\scri^\pm) $. In Kerr, Schwarzschild and Minkowski $\scri^\pm$ are smooth. The various stability proofs for Minkowski space \cite{christodoulou_global_1993,hintz2017global,bieri2009extension} come to different conclusions on the regularity of $\scri^+$ in these more generic settings depending on the choice of initial data. For the arguments in this section it would of course be easiest to assume that $\scri^\pm$ are smooth, however for the simple argument presented here it is sufficient to assume that they are in $C^1$. 
\begin{lemma}\label{lem:OHopen}
Let $\mathcal M$ be a $C^1$ spacetime. Let $\mathcal{M}$ be compactifiable with complete $\scri^\pm$.  Let $p$ be any point in $J^-(\scri^+)\cap J^+(\scri^-)$. If the sets $\Omega_{\HH^\pm}(p)$ are non empty then they are open in $S^2(e_0)$ 
\end{lemma}
\begin{proof}
Let $k$ be in $\Omega_{\HH^+}(p)$. Let $q_0$ be the point where $\gamma(k|_p)$ intersects the event horizon $\mathcal{\HH^+}$. Since $q_0$ is a regular point in $\mathcal M$, $\gamma(k|_p)$ can be extended across $q_0$ and hence the affine parameter between $p$ and $q_0$ has to be finite. Let $q$ be a point on $\gamma(k|_p)$ in the future of $q_0$ and $\mathcal{U}(q)$ an open neighbourhood of $q$ in $\mathcal M$ with $\mathcal{U}(q)\cap \overline{J^-(\scri^+)\cap J^+(\scri^-)}=\emptyset$ then Corollary \ref{cor:geodcont} guarantees that the Lemma is true, because any null geodesic that intersects $\mathcal{U}(q)$ has to intersect the horizon.
\end{proof}
The proof for $\Omega_{\HH^-}$ is identical.

\begin{lemma}\label{lem:OIopen}
Let $\mathcal M$ be a $C^1$ spacetime. Let $\mathcal{M}$ be compactifiable with complete $\scri^\pm$ in $C^1$.  Let $p$ be any point in $J^-(\scri^+)\cap J^+(\scri^-)$. Then $\Omega_{\scri^\pm}(p)$ are open sets in $S^2$.
\end{lemma}
\begin{proof}
Let $k$ be in $\Omega_{\scri^+}(p)$. Let $\tilde M$ be the closure of the compactification of $M$. Let $q_0$ be the point where $\gamma(k|_p)$ intersects $\scri^+$. The spacetime and therefore also the null geodesic $\gamma(k|_p)$ can be extended across the conformal boundary of $\tilde M$. Note that for the following it is not relevant that this extension is not unique but it has to be in $C^1$. Now $q_0$ is a regular point in this extension and the affine parameter between $p$ and $q_0$ is finite in the compactification. Let $q$ be a point on $\gamma(k|_p)$ in the future of $q_0$ and $\mathcal{U}(q)$ an open neighbourhood of $q$ in the extension of $\tilde M$ with $\mathcal{U}(q)\cap \overline{J^-(\scri^+)\cap J^+(\scri^-)}=\emptyset$ then Corollary \ref{cor:geodcont} guarantees that the Lemma is true, because any null geodesic that intersects $\mathcal{U}(q)$ has to intersect $\scri^+$.
\end{proof}
The proof for $\Omega_{\scri^-}$ is identical. Note that the regularity assumptions on $\scri^\pm$ in this Lemma can quite likely be relaxed and replaced by a sufficiently fast fall off in the requirements of asymptotic flatness. The argument would go along the lines that in Minkowski space null geodesics can only have outward turning points. For Schwarzschild and Kerr this is true far enough from the black hole. This is expected to be true far enough out for any asymptotically flat spacetime. Therefore once the outgoing null geodesic enters this region it can only move further away. This will be true for the ones close by as well. Thereby establishing the openness of the sets without going all the way out to $\scri^\pm$ and therefore independently of the fact whether the manifold can be extended across $\scri^\pm$ in any regularity.\\
These Lemmas allow us to prove the main theorem in this section. 

\begin{thm}
Let $\mathcal M$ be a $C^1$ spacetime. Let $\mathcal{M}$ be compactifiable with complete $\scri^\pm$ in $C^1$. Let $p$ be any point in $J^-(\scri^+)\cap J^+(\scri^-)$. If $\Omega_{\HH^\pm}(p)$ is non-empty then 
\begin{itemize}
    \item $\mathbf T _\pm(p)$ is non-empty
    \item let $w(\lambda)$ be any continuous path on $S^2(p)$ with  $\lambda \in [0,1]$ such that $w(0)\in \Omega_{\HH^\pm}(p)$ and $w(1)\in \Omega_{\scri^\pm}(p)$, we have that $w(\lambda)\cap \mathbf T _\pm(p)\neq \emptyset$. 
\end{itemize}
\end{thm}
\begin{proof}By completeness of $\scri^\pm$ and the fact that $p\in J^-(\scri^+)\cap J^+(\scri^-)$ we have that $\Omega_{\scri^\pm}(p)$ are non-empty.
By definition we have $\Omega_{\HH^\pm}(p)\cap\Omega_{\scri^\pm}(p)=\emptyset$. By Lemma \ref{lem:OHopen} and Lemma \ref{lem:OIopen} the sets are open and the theorem follows directly.
\end{proof}

\subsubsection{The Celestial Sphere and \texorpdfstring{$\mathbf T$}{T}-Orthogonal Trapping}
Again following \cite{paganini2018role}, we~will show that under the assumption that trapping is unstable we can prove that no trapped null geodesics with negative energy can exist in any stationary black hole spacetime. It is conceivable that this result can be strengthened to show that instability of trapping actually implies that all trapped null geodesics have to have positive energy. However at present times this remains an open problem. See the informal lecture notes \cite{claudio} for more details. 

\begin{lemma}
Let $\mathcal{M}$ be a smooth stationary black hole spacetime with Killing vector field $T$ with one asymptotically flat end. Further assume future and past trapping to be unstable in the sense that for any observer at any point $p$ in the exterior region we have that for any $k\in \mathbb{T}_+(p)$ for any $\epsilon>0$
\begin{itemize}
\item $B_\epsilon(k)\cap \Omega_{\mathcal{H}^+}(p)\neq \emptyset$
\item $B_\epsilon(k)\cap \Omega_{\scri^+}(p)\neq \emptyset$.
\end{itemize}
then any trapped null geodesic $\gamma$ in the exterior region satisfies $g(T,\dot \gamma)\geq0$. 
\end{lemma}
\begin{proof}
Note that for every $k\in \Omega_{\mathcal{I}^+}(p)$ we have $-\dot\gamma^\mu( k|_p) T_\mu>0$. 
Due to the instability condition we can choose a convergent sequence $q_i \in \Omega_{\mathcal{I}^+}(p)$ with $\lim_{i \to \infty} q_i =k$ for any  $k\in \mathbb{T}_+(p)$ .
 We then have that 
 \begin{equation}
 E(k)=-\dot\gamma^\mu( k|_p) T_\mu =\lim_{i \to \infty} -\dot\gamma^\mu( q_i|_p) T_\mu \geq 0
 \end{equation}
The statement then follows from the fact that the trapped set is given by $\mathbb{T}(p):= \mathbb{T}_+(p)\cap\mathbb{T}_-(p)$.
\end{proof}
This concludes our excursion beyond the Kerr family of spacetimes. 

\FloatBarrier
\section{Application}\label{sec:app}
Everything we have derived about the behaviour of null geodesics in Kerr spacetimes can be represented in a couple of simple plots. See Figure \ref{fig:potentials}, Figure \ref{fig:ssshadows} and Figure \ref{fig:kerrshadows} as examples; in the Mathematica notebook provided with this paper \cite{blazej_marius} the parameters $a/M$ and $\QQ$ as well as the location of the observer $\{r(p),\theta(p)\}$ can be varied. This allows one to develop an intuitive understanding of the influence of these parameters.

\begin{figure}[ht!]
\centering
\includegraphics[width=120mm]{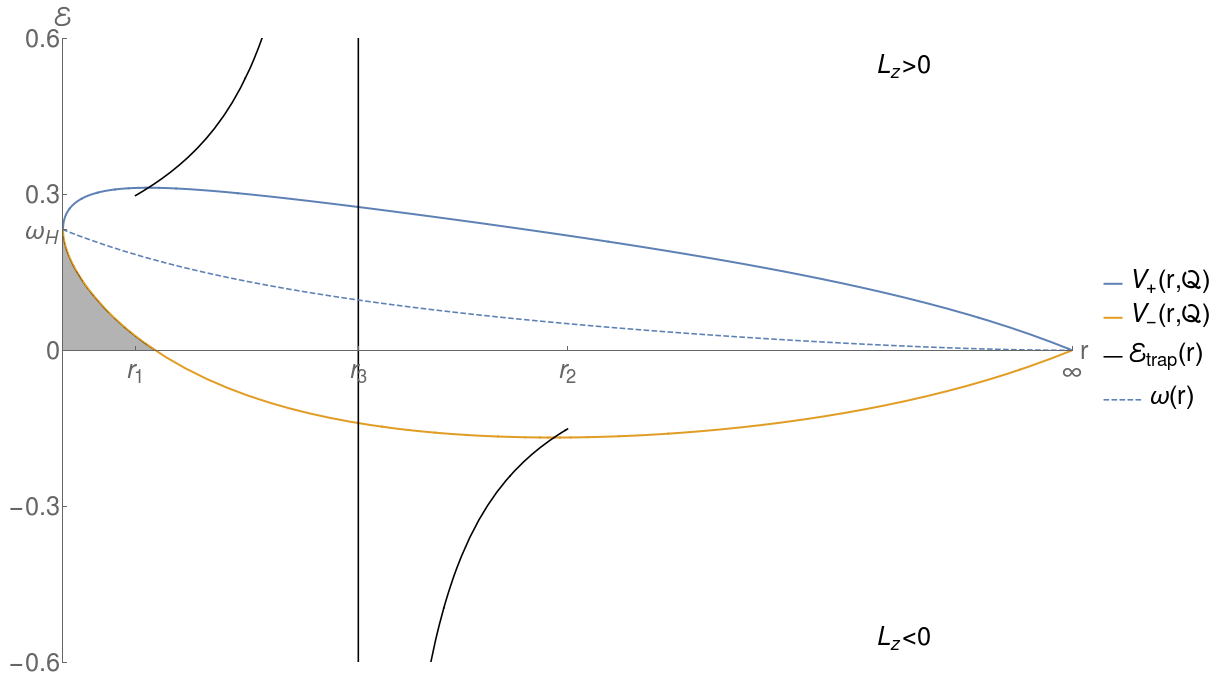}
\caption{Plot of the pseudo potentials $V_{\pm}$ as function of a compactified radial coordinate in the exterior region for $a=0.764$ and $\mathcal Q = 0.18$. Its qualitative features are preserved when $a$ and $\mathcal Q$ are changed. The location of trapping in phase space is indicated by the function $\mathcal E _{trap} (r)$. The extrema of the pseudo potentials are the intersection of $V_+$ and $V_-$ with this function. Therefore they slide on this curve as $\mathcal Q$ increases. The area filled in gray corresponds to geodesics with $E<0$. It is clear from this plot that the regions occupied by geodesics of negative energy and trapped geodesics respectively are disjoint in phase space.}
\label{fig:potentials}
\end{figure}
\noindent Furthermore by the eikonal approximation it is clear, that a massless wave equation should relate to the null geodesic equation in the limit of high frequencies. In \cite{2014arXiv1402.7034D} it is shown that when separating the wave equation $\Sigma\Box \psi=0$  the ODE for the radial function in Schr\"odinger form can be written as:
\begin{equation}
\frac{\mathrm{d}^2 u}{\mathrm{d}r^{* \ 2}} + \left ( \frac{R(r,E=\omega,L_z=m,L^2=\lambda _{lm})}{(r^2+a^2)^2} -F(r) \right) u = 0
\end{equation}
with $F(r)=\frac{\Delta}{r^2 + a^2}(a^2 \Delta +2Mr(r^2-a^2)) \geq 0$ and hence we have the following relations:
\begin{equation}
\omega \sim E, \qquad m \sim L_z,\qquad \lambda _ {lm} \sim L^2.
\end{equation}
When trying to understand the different treatments of different parameter ranges in \cite{2014arXiv1402.7034D} it is helpful to play with the parameters of the pseudo potential in the Mathematica notebook provided with this paper \cite{blazej_marius}. The construction of the different mode currents becomes much more intuitive when thinking about where in Figure \ref{fig:potentials} the corresponding parameters are located. Note that in the high frequency limit the pseudo potentials correspond to the location of $\omega^2-V(r,\omega,m,\Lambda)=0$ and hence the location where the leading contribution to the bulk terms of the $Q^y$ and $Q^h$ currents change their sign. \\
Another interesting observation is that combining the results in section \ref{sec:trapped} and section \ref{sec:tortho}, we can see that to separate trapping from the ergoregion in physical space it is sufficient if we restrict the null geodesics to be either co- or counter-rotating. In the co-rotating case there simply does not exist an ergoregion and the statement is clear. In the counter rotating case trapping is constrained to $r\in(r_3,r_2]$ and $r_3>2M\geq r_{ergo}$ for all Kerr spacetimes with $a<M$. In this direction particularly interesting might be the potential functions $\Psi_\pm$ in \cite{hasse_morse-theoretical_2006} which have interesting properties in physical space.

\subsection*{Acknowledgements} We are grateful to Lars Andersson, Volker Perlick and Siyuan Ma for helpful discussions and their comments on the manuscript.
\appendix
\section{}\label{app:A}
Let $f(x)$ be a smooth function on $[-1,1]$ vanishing at the boundary points with a unique maximum with value $1$ at zero. Hence $f(0)=1$, $f'(0)=0$ and $f''(0)<0$. We then define: 
\begin{align}
g_{1a}(x)&=\arcsin(f(x)): & [-1,0)\rightarrow&[0,\pi/2)\\
g_{2a}(x)&=\pi-\arcsin(f(x)):& [-1,0)\rightarrow&(\pi/2,\pi]\\
g_{1b}(x)&=\arcsin(f(x)):& (0,1]\rightarrow&[0,\pi/2)\\
g_{2b}(x)&=\pi-\arcsin(f(x)):& (0,1]\rightarrow&(\pi/2,\pi]
\end{align}
Note that $g_{1a/b}'(x)=-g_{2a/b}'(x)$. We then calculate: 
\begin{equation}
    \frac{d}{dx}g_{1a}(x)= \frac{f'(x)}{\sqrt{1-f(x)^2}}.
\end{equation}
 Note that both the nominator and denominator vanish as $x$ goes to zero. However to apply the rule of l'Hopital we have to consider the square of the expression. We~then get:
\begin{equation}
    \lim_{x \rightarrow0} \left(\frac{d}{dx}\arcsin(f(x))\right)^2=\lim_{x\rightarrow0} \frac{-f''(x)}{f(x)}=-f''(0). 
\end{equation}
Thus $d/dx (\arcsin(f(x)))|_{x=0}= \sqrt{-f''(0)}$. The sign is chosen based on the fact that $d/dx (\arcsin(f(x)))>0$ for $x\in[1,0)$ . Note that on $[-1,0)$ the derivative of $g_1(x)$ is positive while on $(0,1]$ it is negative. Together this gives us that the function:
\begin{equation}
    g(x)=\begin{cases}
    g_{1a}(x)& \text{ if } x\in [-1,0)\\
    \pi/2&  \text{ if } x=0\\
    g_{2b}(x)&  \text{ if } x\in (0,1]
    \end{cases}
\end{equation}
is smooth at $x=0$ and therefore on $[-1,1]$ with $d/dx(g(x))|_{x=0}=\sqrt{-f''(0)}$. 



\newcommand{\arxivref}[1]{\href{http://www.arxiv.org/abs/#1}{{arXiv.org:#1}}}
\newcommand{\mnras}{Monthly Notices of the Royal Astronomical Society}
\newcommand{\prd}{Phys. Rev. D}
\newcommand{\apj}{Astrophysical J.}

\bibliographystyle{amsplain}
\bibliography{shadow,kerr}

\providecommand{\bysame}{\leavevmode\hbox to3em{\hrulefill}\thinspace}
\providecommand{\MR}{\relax\ifhmode\unskip\space\fi MR }
\providecommand{\MRhref}[2]{%
  \href{http://www.ams.org/mathscinet-getitem?mr=#1}{#2}
}
\providecommand{\href}[2]{#2}
\begin{thebibliography}{10}

\bibitem{claudio}
\emph{https://thehappyproton.ch/wp-content/uploads/2020/07/{KerrUniquenesslecturenotes}.pdf}.

\bibitem{blazej_marius}
\emph{http://www.aei.mpg.de/{KerrNullgeodesicsPlots}}, { (Direct URL to
  download of notebook)}.

\bibitem{abdujabbarov2015coordinate}
AA~Abdujabbarov, L~Rezzolla, and BJ~Ahmedov, \emph{A coordinate-independent
  characterization of a black hole shadow}, Monthly Notices of the Royal
  Astronomical Society \textbf{454} (2015), no.~3, 2423--2435.

\bibitem{bardeen_black_1973}
J.M. Bardeen, \emph{Black {Holes} 215-39}, CRC Press, January 1973 (en).

\bibitem{bieri2009extension}
Lydia Bieri, \emph{An extension of the stability theorem of the minkowski space
  in general relativity}, arXiv preprint arXiv:0904.0620 (2009).

\bibitem{bugden2019trapped}
Mark Bugden, \emph{Trapped photons in schwarzschild--tangherlini spacetimes},
  Classical and Quantum Gravity \textbf{37} (2019), no.~1, 015001.

\bibitem{carter_global_1968}
Brandon Carter, \emph{Global {Structure} of the {Kerr} {Family} of
  {Gravitational} {Fields}}, Physical Review \textbf{174} (1968), no.~5,
  1559--1571.

\bibitem{cederbaum2019geometry}
Carla Cederbaum and Sophia Jahns, \emph{Geometry and topology of the kerr
  photon region in the phase space}, General Relativity and Gravitation
  \textbf{51} (2019), no.~6, 79.

\bibitem{MR1647491}
S.~Chandrasekhar, \emph{The mathematical theory of black holes}, Oxford Classic
  Texts in the Physical Sciences, The Clarendon Press, Oxford University Press,
  New York, 1998, Reprint of the 1992 edition. \MR{1647491}

\bibitem{christodoulou_global_1993}
Demetrios Christodoulou and Sergiu Klainerman, \emph{The global nonlinear
  stability of the minkowski space}, Princeton University Press, January 1993
  (en).

\bibitem{cohn_algebra._1982}
P.~M. Cohn, \emph{Algebra. {Volume} 1. {Second} {Edition}}, 2nd edition ed.,
  Wiley, Chichester Sussex ; New York, July 1982 (English).

\bibitem{collaboration2019first}
Event Horizon~Telescope Collaboration, R~Azulay, A-K Baczko, S~Britzen,
  G~Desvignes, RP~Eatough, R~Karuppusamy, J-Y Kim, M~Kramer, TP~Krichbaum,
  et~al., \emph{First m87 event horizon telescope results. i. the shadow of the
  supermassive black hole}, ApJL \textbf{875} (2019), no.~1, L1.

\bibitem{cunha_shadows_2016}
Pedro V.~P. Cunha, Carlos A.~R. Herdeiro, Eugen Radu, and Helgi~F. Runarsson,
  \emph{Shadows of {Kerr} black holes with and without scalar hair},
  International Journal of Modern Physics D \textbf{25} (2016), no.~09,
  1641021, arXiv: 1605.08293.

\bibitem{cunha2018shadows}
Pedro~VP Cunha and Carlos~AR Herdeiro, \emph{Shadows and strong gravitational
  lensing: a brief review}, General Relativity and Gravitation \textbf{50}
  (2018), no.~4, 42.

\bibitem{2014arXiv1402.7034D}
Mihalis Dafermos, Igor Rodnianski, and Yakov Shlapentokh-Rothman, \emph{Decay
  for solutions of the wave equation on {K}err exterior spacetimes {III}: {T}he
  full subextremal case$| a|< m$}, Annals of mathematics (2016), 787--913.

\bibitem{grenzebach_aberrational_2015}
Arne Grenzebach, \emph{Aberrational {Effects} for {Shadows} of {Black}
  {Holes}}, Equations of Motion in Relativistic Gravity, Springer, 2015,
  pp.~823--832.

\bibitem{grenzebach_photon_2014}
Arne Grenzebach, Volker Perlick, and Claus L{\"a}mmerzahl, \emph{Photon regions
  and shadows of {Kerr}-{Newman}-{NUT} black holes with a cosmological
  constant}, Physical Review D \textbf{89} (2014), no.~12, 124004.

\bibitem{grenzebach_photon_2015}
Arne Grenzebach, Volker Perlick, and Claus L\"ammerzahl, \emph{Photon {Regions}
  and {Shadows} of {Accelerated} {Black} {Holes}}, International Journal of
  Modern Physics D \textbf{24} (2015), no.~09, 1542024, arXiv: 1503.03036.

\bibitem{hartman2002ordinary}
Philip Hartman, \emph{Ordinary differential equations}, 2002.

\bibitem{hasse_morse-theoretical_2006}
Wolfgang Hasse and Volker Perlick, \emph{A {Morse}-theoretical analysis of
  gravitational lensing by a {Kerr}-{Newman} black hole}, Journal of
  Mathematical Physics \textbf{47} (2006), no.~4, 042503, arXiv: gr-qc/0511135.

\bibitem{hintz2017global}
Peter Hintz and Andr{\'a}s Vasy, \emph{A global analysis proof of the stability
  of minkowski space and the polyhomogeneity of the metric}, arXiv preprint
  arXiv:1711.00195 (2017).

\bibitem{hioki_measurement_2009}
Kenta Hioki and Kei-ichi Maeda, \emph{Measurement of the {Kerr} spin parameter
  by observation of a compact object's shadow}, Physical Review D \textbf{80}
  (2009), no.~2, 024042.

\bibitem{li_measuring_2014}
Zilong Li and Cosimo Bambi, \emph{Measuring the {Kerr} spin parameter of
  regular black holes from their shadow}, Journal of Cosmology and
  Astroparticle Physics \textbf{2014} (2014), no.~01, 041--041, arXiv:
  1309.1606.

\bibitem{mars2017fingerprints}
Marc Mars, Claudio~F Paganini, and Marius~A Oancea, \emph{The fingerprints of
  black holes - shadows and their degeneracies}, Classical and Quantum Gravity
  \textbf{35} (2017), no.~2, 025005.

\bibitem{MTW}
Charles~W. Misner, Kip~S. Thorne, and John~Archibald Wheeler,
  \emph{Gravitation}, W. H. Freeman and Co., San Francisco, Calif., 1973.

\bibitem{GSHE2020}
Marius~A. Oancea, J\'er\'emie Joudioux, I.~Y. Dodin, D.~E. Ruiz, Claudio~F.
  Paganini, and Lars Andersson, \emph{{Gravitational spin Hall effect of
  light}}, Physical Review D \textbf{102} (2020), 024075.

\bibitem{oancea2019overview}
Marius~A Oancea, Claudio~F Paganini, J{\'e}r{\'e}mie Joudioux, and Lars
  Andersson, \emph{An overview of the gravitational spin hall effect}, arXiv
  preprint arXiv:1904.09963 (2019).

\bibitem{oneill_geometry_2014}
Barrett O'Neill, \emph{The {Geometry} of {Kerr} {Black} {Holes}}, reprint
  edition ed., Dover Publications, January 2014 (Englisch).

\bibitem{paganini2018smoothness}
Claudio~F Paganini and Marius~A Oancea, \emph{Smoothness of the future and past
  trapped sets in kerr--newman--taub-nut spacetimes}, Classical and Quantum
  Gravity \textbf{35} (2018), no.~6, 067001.

\bibitem{paganini2018role}
Claudio~Francesco Paganini, \emph{The role of trapping in black hole
  spacetimes},  (2018).

\bibitem{penrose_spinors_1987}
Roger Penrose and Wolfgang Rindler, \emph{Spinors and {Space}-{Time}: {Volume}
  1, {Two}-{Spinor} {Calculus} and {Relativistic} {Fields}}, Cambridge
  University Press, Cambridge u.a., May 1987 (English).

\bibitem{gravitational_lenses_book}
P.~{Schneider}, J.~{Ehlers}, and E.~E. {Falco}, \emph{{Gravitational Lenses}},
  Springer, 1992.

\bibitem{slezakova_geodesic_2006}
Gabriela Slezakova, \emph{Geodesic {Geometry} of {Black} {Holes}}, Thesis, The
  University of Waikato, 2006.

\bibitem{synge_escape_1966}
J.~L. Synge, \emph{The escape of photons from gravitationally intense stars},
  Monthly Notices of the Royal Astronomical Society \textbf{131} (1966), 463.

\bibitem{teo_spherical_2003}
Edward Teo, \emph{Spherical {Photon} {Orbits} {Around} a {Kerr} {Black}
  {Hole}}, General Relativity and Gravitation \textbf{35} (2003), no.~11,
  1909--1926 (en).

\bibitem{wilkins_bound_1972}
Daniel~C. Wilkins, \emph{Bound geodesics in the kerr metric}, Physical Review D
  \textbf{5} (1972), no.~4, 814--822.

\end{thebibliography}

\end{document}